\documentclass{article}


\usepackage[preprint]{neurips_2024}




\usepackage[utf8]{inputenc} 
\usepackage[T1]{fontenc}    
\usepackage{hyperref}       
\usepackage{url}            
\usepackage{booktabs}       
\usepackage{amsfonts}       
\usepackage{nicefrac}       
\usepackage{microtype}      
\usepackage{graphicx}
\usepackage{wrapfig}
\usepackage{amsmath}
\usepackage{multirow}
\usepackage[table,xcdraw]{xcolor}
\usepackage{colortbl}      

\usepackage{siunitx}       

\definecolor{lightgray}{gray}{0.9}
\definecolor{red}{rgb}{1,0,0}

\newcommand{\gain}[1]{\textcolor{red}{(+#1)}}

\usepackage{microtype}
\usepackage{subfigure}
\usepackage[linesnumbered, ruled, vlined]{algorithm2e}
\usepackage{caption}
\usepackage{authblk}
\usepackage{hyperref}
\usepackage{amsmath}
\usepackage{amssymb}
\usepackage{mathtools}
\usepackage{amsthm}

\usepackage[capitalize,noabbrev]{cleveref}
\theoremstyle{plain}
\newtheorem{theorem}{Theorem}[section]

\theoremstyle{definition}

\newtheorem{assumption}[theorem]{Assumption}
\theoremstyle{remark}

\usepackage[textsize=tiny]{todonotes}

\title{UniCompress: Enhancing Multi-Data Medical Image Compression with Knowledge Distillation}

%

\author[1]{Runzhao Yang}
\author[2,\thanks{Equal Contribution.}]{Yinda Chen}
\author[1]{Zhihong Zhang}
\author[2]{Xiaoyu Liu}
\author[3]{Zongren Li}
\author[3]{Kunlun He}
\author[2]{Zhiwei Xiong}
\author[1]{Jinli Suo}
\author[1]{Qionghai Dai}

\affil[1]{Tsinghua University}
\affil[2]{University of Science and Technology of China}
\affil[3]{301 Hospital}


\begin{document}

\maketitle

\begin{abstract}
In the field of medical image compression, Implicit Neural Representation (INR) networks have shown remarkable versatility due to their flexible compression ratios, yet they are constrained by a one-to-one fitting approach that results in lengthy encoding times. Our novel method, ``\textbf{UniCompress}'', innovatively extends the compression capabilities of INR by being the first to compress multiple medical data blocks using a single INR network. By employing wavelet transforms and quantization, we introduce a codebook containing frequency domain information as a prior input to the INR network. This enhances the representational power of INR and provides distinctive conditioning for different image blocks. Furthermore, our research introduces a new technique for the knowledge distillation of implicit representations, simplifying complex model knowledge into more manageable formats to improve compression ratios. Extensive testing on CT and electron microscopy (EM) datasets has demonstrated that UniCompress outperforms traditional INR methods and commercial compression solutions like HEVC, especially in complex and high compression scenarios. Notably, compared to existing INR techniques, UniCompress achieves a 4$\sim$5 times increase in compression speed, marking a significant advancement in the field of medical image compression.
Codes will be publicly available.
\end{abstract}

\section{Introduction}
Implicit Neural Representations (INRs), which leverage neural networks to approximate continuous functions within complex data structures \cite{sitzmann2020implicit, sitzmann2019scene,zhang2021generator}, were initially applied in scene rendering and reconstruction \cite{pumarola2021d, martin2021nerf}. Recent advancements have extended the use of INRs to lossy compression, achieving general, modality-agnostic compression by overfitting small networks \cite{yang2023sci, yang2023tinc}. INRs have demonstrated comparable performance to traditional methods in terms of objective metrics like PSNR and SSIM as well as image quality \cite{guo2023compression,tang2024z,tang2024zerothorder}. However, in large-scale medical datasets, INR methods lag behind traditional methods \cite{yang2022sharing}, primarily due to their local fitting approach, which often overlooks high-frequency information in images, resulting in excessive smoothing. Additionally, the one-to-one fitting approach of INRs leads to longer encoding times, posing significant challenges in large-scale medical image compression applications.

In this paper, we first theoretically analyze and compare the bounds on reconstruction errors for INR and VAE. We derive that if the INR decoder generates images closer to the original images with smaller magnitudes, INR will have a lower reconstruction error. Based on these insights, we introduce UniCompress, the first method capable of simultaneously compressing multiple data blocks, aiming to reduce encoding time and leverage existing knowledge. Our approach begins with a wavelet transform of medical image blocks, decomposing them into high-frequency and low-frequency components \cite{villasenor1995wavelet, liu2018multi,zhang2024deepgi}. Robust visual features are then extracted using networks pretrained on large-scale medical images \cite{chen2023generative,mo2024password,pmlr-v216-tang23a}. We employ an attention fusion module to integrate features from different modalities and use a learnable codebook \cite{esser2021taming} to quantize the extracted multimodal features. This process combines discrete feature values $\mathcal{Z}$ with the numerical inputs of the INR, facilitating the parallel learning of multiple image block features and incorporating frequency domain information as priors.

To further enhance our method, we implement knowledge distillation \cite{wang2019private}, where a smaller student network learns prior representations extracted by a larger teacher network. By aligning these representations through consistency in discrete codebook features and intermediate features in the reconstructed image space, we can significantly improve the compression ratio while maintaining visual quality.

Our comprehensive evaluation shows that using a simple MLP SIREN network, our method achieves an average PSNR improvement of \textbf{0.95 dB} and \textbf{1.71 dB} on CT and EM datasets, respectively, at the same compression ratio. Further distillation results yield performance gains of \textbf{0.71 dB} and \textbf{0.77 dB}. Moreover, we find that even cross-modal knowledge distillation aids performance, with out-of-domain knowledge providing more significant benefits in simpler datasets like Liver. Our method increases data encoding speed by 4$\sim$5 times, emphasizing efficient parallel processing across multiple datasets.

In summary, our contributions in this paper are as follows:

(1) We establish a theoretical analysis comparing the error bounds of INR and VAE, deriving sufficient conditions under which INR achieves lower reconstruction errors if the INR decoder generates images closer to the original with smaller magnitudes.

(2)We propose the first method that employs Implicit Neural Representations (INRs) to simultaneously represent multiple datasets. By incorporating a discrete codebook prior with frequency domain information, our network can represent diverse medical datasets concurrently.

(3) We introduce a two-stage training method using knowledge distillation, which transfers learned representations from complex models to simpler, smaller models, significantly enhancing inference speed and compression ratio.

(4) We conduct experiments on complex Electron Microscopy and CT data. Despite using only a basic SIREN scheme, our UniCompress demonstrates robust performance, validating its effectiveness in handling challenging datasets.

\section{Related Work}
\textbf{Implicit Neural Representation.}
Recent advances in Implicit Neural Representations (INR) have shown significant progress. SIREN \cite{sitzmann2020implicit} uses periodic activation functions for enhanced scene rendering. WIRE \cite{saragadam2023wire} employs wavelet-based INR with complex Gabor wavelets for better visual signal representation and noise reduction. INCODE \cite{kazerouni2024incode} integrates image priors for adaptive learning of activation function parameters.

In medical imaging, INRs like SIREN, ReLU, Snake, Sine+, and Chirp are effective for brain image registration by estimating displacement vectors and deformation fields \cite{byra2023exploring}. For video frame interpolation (VFI), INRs align signal derivatives with optical flow to maintain brightness, outperforming some state-of-the-art models \cite{zhuang2022optical}.

\textbf{Medical Imaging Prior.}
Effective feature extraction enhances network performance. Zhou et al. \cite{zhou2023xnet} used wavelet-transformed spectral features for segmentation, focusing on high-frequency areas. Chen et al. \cite{chen2023self} used Histogram of Oriented Gradients (HOG) to balance performance and computational complexity. Self-supervised methods like multi-scale contrastive learning \cite{chen2023learning} and text-image feature alignment \cite{chen2023generative,chen2024bimcv} have also shown progress.

\textbf{Medical Image Compression.}
Efficient compression is crucial for large medical datasets. Traditional methods like JPEG 2000 and JP3D use wavelet transforms for optimal performance and complexity \cite{bruylants2015wavelet}. Xue et al. \cite{xue2023dbvc} achieve efficient 3D biomedical video encoding through motion estimation and compensation.
Variational Autoencoders (VAEs) are also explored for lossy compression and structured MRI compression \cite{van2017neural}. QARV \cite{duan2023qarv} uses quantization-aware ResNet VAE, while Attri-VAE \cite{cetin2023attri} provides attribute-based interpretable representations.
INR-based methods like SCI \cite{yang2023sci} and TINC \cite{yang2023tinc} offer flexible block-wise approaches. COMBINER \cite{guo2023compression} compresses data as INRs using relative entropy encoding. Despite their flexibility, INRs face challenges like low processing efficiency and high-frequency information loss. Our approach integrates medical image characteristics into INR inputs and processes multiple data blocks simultaneously, an unexplored area in INR compression for large-scale medical imaging.
\begin{figure*}[t]
    \centering
    \includegraphics[width =\linewidth]{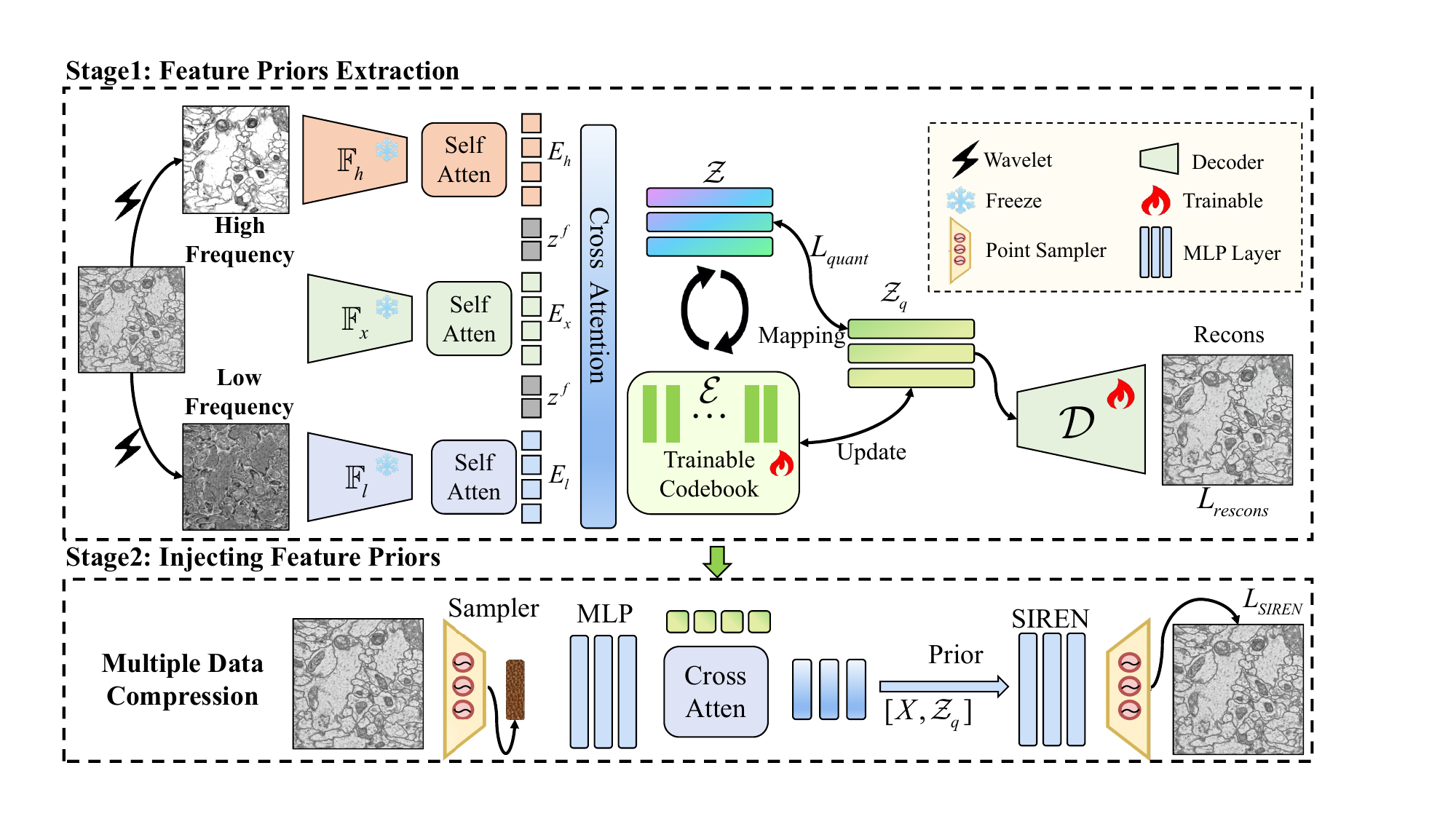}
    \vspace{-0.4cm}
    \caption{Schematic representation of a wavelet-based prior knowledge infusion into an INR compression network. The process involves the extraction of high, global, and low-frequency components using wavelet transform. These components are individually refined through self-attention mechanisms and collectively integrated via cross-attention with the INR network. The architecture further processes the information through transformer blocks and an MLP for effective compression of neural representations.}
    \label{fig:main_teacher}
\end{figure*}
INR-based methods like SCI \cite{yang2023sci} and TINC \cite{yang2023tinc} offer flexible block-wise approaches. COMBINER \cite{guo2023compression} compresses data as INRs using relative entropy encoding. These methods are more flexible and universal across different modalities.

Despite their flexibility, INRs face challenges like low processing efficiency and high-frequency information loss. To address these, we integrate medical image characteristics into INR inputs through robust visual feature extraction and process multiple data blocks simultaneously. This approach is largely unexplored in INR compression for large-scale medical imaging.

\section{Methodology}

\subsection{Problem Description}
We tackle the problem of representing multiple volumetric medical images using a conditional INR network. Our objective is to encode a set of N volumetric medical images $\{\mathcal{V}_1, \mathcal{V}_2, ..., \mathcal{V}_N\}$, where each $\mathcal{V}_i$ represents a 3D image block, into a compressed format.

We begin by conducting a theoretical analysis to compare the bounds on reconstruction errors between INR and Variational Autoencoder (VAE) methods. Our analysis reveals that if the INR decoder generates images that are closer to the original images with smaller magnitudes, then the INR will have a lower reconstruction error (See in Section \ref{sec:mae}). Based on this insight, we propose introducing prior frequency information into the INR framework.

The process begins with a specialized network, denoted as $\mathbb{F}$, which is designed to extract prior features $\mathcal{P}_i$ from each image block $\mathcal{V}_i$. These prior features $\mathcal{Z}_i$ encapsulate essential structural and intensity characteristics of the volumetric data.

Given a set of spatial coordinates $\mathbf{x} \in \mathbb{R}^3$, the INR network $\mathcal{F}$ is trained to predict the intensity value at $\mathbf{x}$, utilizing both the coordinates and the extracted prior features. The mapping function of the INR network is described as
\begin{equation}
    \mathcal{I}(\mathbf{x}) = \mathcal{F}({x_i}, \mathcal{Z}_i; \Theta),
    \label{euq_total}
\end{equation}
where $\mathcal{I}(\mathbf{x})$ represents the predicted intensity value at coordinate $\mathbf{x}$, $\mathcal{Z}_i$ are the prior features corresponding to $\mathcal{V}_i$, and $\Theta$ denotes the parameters of the INR network.

The compression ratio is determined by considering both the dimensions of the network parameters $\Theta$ and the dimensions of the prior features $\mathcal{Z}$. The aim is to optimize the INR network such that the reconstruction error between the original images $\mathcal{V}_i$ and their reconstructions from the compressed representation is minimized, while also maximizing the compression ratio. The framework is shown in Figure \ref{fig:main_teacher}.

\subsection{Feature Quantization and Extraction}
Our method combines wavelet transformation, pre-trained feature extraction, and codebook-based quantization to efficiently represent and reconstruct 3D medical images. The key steps are as follows:

\textbf{Wavelet Transformation.} We apply wavelet transformation to 3D medical images, extracting high-frequency, low-frequency, and original image features separately. The transformation is represented as
\begin{equation}
\begin{aligned}
W_{high}(x, y, z) &= \int \hspace{-0.1cm}\int \hspace{-0.1cm}\int I(x', y', z') \cdot \psi_{h}(x-x', y-y', z-z') \,dx' \,dy' \,dz' ,\\
W_{low}(x, y, z) &= \int\hspace{-0.1cm}\int\hspace{-0.1cm}\int I(x', y', z') \cdot \psi_{l}(x-x', y-y', z-z') \,dx' \,dy' \,dz' ,
\end{aligned}
\label{waveequ}
\end{equation}
where $W_{high}(x, y, z)$ represents high-frequency information, $W_{low}(x, y, z)$ represents low-frequency information, $\psi_{h}$ and $\psi_{l}$ are high-pass and low-pass wavelet functions.

To obtain high-dimensional 3D features, we employ a pre-trained network $\mathbb{F}$ using a self-supervised method, as proposed by \cite{chen2023self}, which aligns multi-modal information through a contrastive learning approach. We keep the weights of $\mathbb{F}$ frozen during training. The extracted features are denoted as
\begin{equation}
w_{h}=\mathbb{F}(W_{h}), w_{l}=\mathbb{F}(W_{l}), w_x=\mathbb{F}(W_x),
\label{equ:extract}
\end{equation}
where $w_{h}$, $w_{l}$, and $w_x$ represent the high-frequency feature, low-frequency feature, and the raw image, respectively.

\textbf{Multimodal Feature Fusion.}
After obtaining high-dimensional features from images, we first utilize an MLP layer to transform these features into low-dimensional embeddings. We then employ a hybrid hierarchical attention mechanism for the fusion of multimodal features. To ensure feature distinctiveness in different modalities, we initially calculate the attention weights as $\mathbf{E}_{h}, \mathbf{E}_{l}, \mathbf{E}_x$ using a self-attention mechanism ($\mathbf{E} = MSA(LN(w_i))+w_i$). Subsequently, the resulting tokens are concatenated, and special bottleneck tokens $z^f$ are inserted between each modality for differentiation. The final weighted attention is computed as 
\begin{equation}
    \mathcal{Z} = MCA([\mathbf{E}_h, \mathbf{E}_l, \mathbf{E}_x] + pos \mid z^f),
\end{equation}
where $\mathbf{E}_h$, $\mathbf{E}_l$, and $\mathbf{E}_x$ represent the tokens from high-frequency, low-frequency, and raw image features, respectively, and $pos$ and $z^f$ are learnable position embeddings and bottleneck tokens. $MCA$ is the cross-attention mechanism. $LN$ means layer normalization.

\textbf{Codebook-based Quantization.} 
 The process of quantizing continuous features into a discrete codebook serves dual purposes: firstly, it reduces storage requirements, and secondly, it results in more consistent and stable feature representations. We employ a learnable codebook of dimension \(d\), mapping the prior features \(\mathcal{P}\) to the nearest vector (codeword) in the codebook. The quantization layer can be represented as
\begin{equation}
    z_q = \arg \min_{e_i \in \mathcal{E}} \| z - e_i \|,
    \label{equ_map}
\end{equation}
where \(\mathcal{E}\) is the set of all codewords in the codebook, and \(z_q\) is the quantized representation. The decoder $\mathcal{D}$ receives this quantized latent representation \(z_q\).

The final loss function is bifurcated into two segments: reconstruction loss and quantization loss. The reconstruction loss is employed to minimize the disparity between the input data $X$ and its reconstruction, expressed as 
\begin{equation}
    L_{MSE}=\| \mathcal{D}(z_q)-X_i \|^2.
\end{equation}
The quantization loss in our model is inspired by VQ-GANs \cite{esser2021taming}, but with a significant modification to suit our architecture. As our feature extractor is a fixed pre-trained network, we only utilize the first component of the VQ-GAN quantization loss, which is designed to stabilize the quantization process. This component, encoder output and quantized vector discrepancy, is given by
\begin{equation}
    L_{quant} = \| \text{sg}[z] - z_q \|^2,
\end{equation}
where \( \text{sg} \) denotes the stop-gradient operation.
Consequently, the overall loss function for our model is modified to
\begin{equation}
    L_{t} = L_{recon} + \gamma L_{quant},
\end{equation}
where \( \gamma \) is a weighting factor that balances the reconstruction and the first component of the quantization loss.

\subsection{Compression Knowledge Distillation}
\begin{wrapfigure}{r}{0.5\textwidth}
    \centering
    \includegraphics[width=\linewidth]{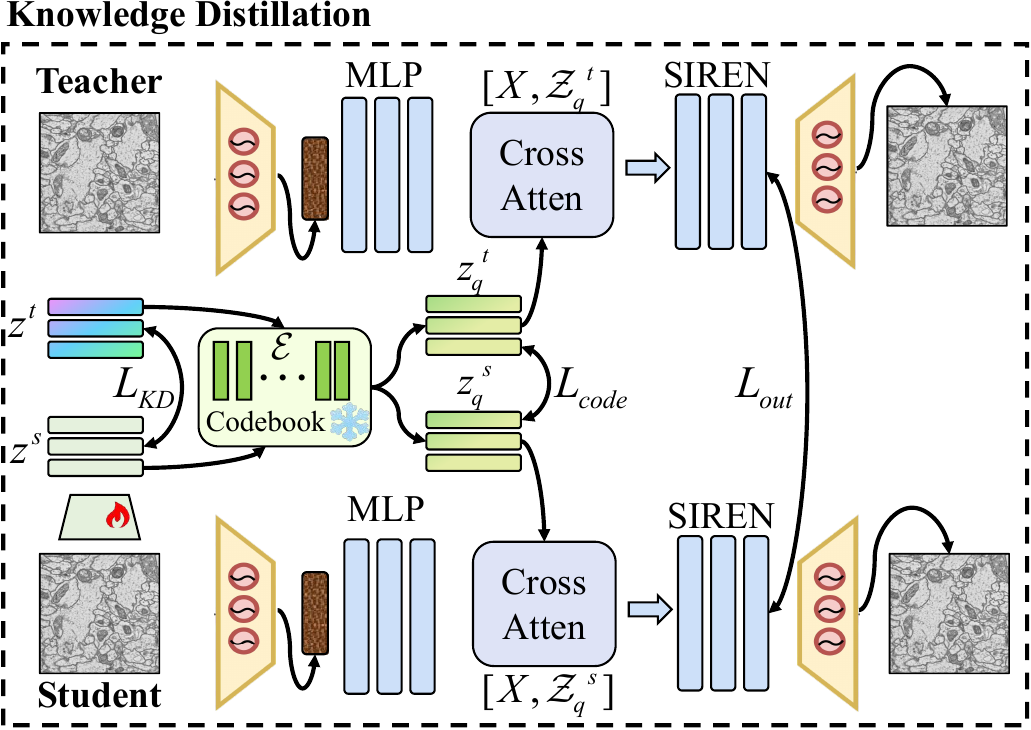}
    \vspace{-0.4cm}
    \caption{Knowledge distillation pipeline: student model mimics teacher using cross-attention for feature transfer, followed by parallel transformer processing, enhancing speed and compression while maintaining accuracy.}
    \vspace{-0.2cm}
    \label{fig:student}
\end{wrapfigure}
As delineated in Equation \ref{euq_total}, our compression strategy fundamentally relies on the discrete prior features $\mathcal{Z}_q$ and continuous coordinate values $X$, employing an INR network to fit a smooth function. In this context, we utilize a SIREN network for fitting purposes. Furthermore, we employ a knowledge distillation approach to condense the prior information from a more complex teacher model into a simplified student model. Our distillation process is bifurcated into two segments: aligning the intermediate discrete feature outputs and the final intensity estimation outputs between the teacher-student models as shown in Figure \ref{fig:student}.

\textbf{Intermediate Feature Alignment.}
In the first segment, the teacher model maintains its architecture \(\mathcal{F}^t\) as previously established, while the student model \(\mathcal{F}^s\) is subjected to a simplification process. We utilize the identical pre-trained network \(\mathbb{F}\) as a feature extractor for both models, the teacher model distinctively extracts high-frequency, low-frequency, and original image information. In contrast, the student model focuses solely on extracting features from the original image. Diverging from the teacher model's approach, we have designed the feature extractor of the student model to be trainable. This ensures that the feature remains consistent with that of the teacher model. Both teacher and student models share a common codebook \(\mathcal{E}\) for nearest neighbor mapping as equation \ref{equ_map}. Consequently, the loss is bifurcated into two pivotal components: discrete and continuous alignment. The discrete alignment is articulated as
\begin{equation}
    L_{code} = \| z^s - \text{sg}[z_q^s] \|^2 + \| z_q^t - z_q^s \|^2,
\end{equation}
where \( z_q^t \) represents the discrete codebook of the teacher model, and \( z_q^s \) corresponds to that of the student model. On the other hand, continuous alignment is described by the equation
\begin{equation}
L_{KD} = \tau^2 \hspace{-0.1cm}\cdot\hspace{-0.1cm} KL\left( \text{softmax}\left(\frac{z^t}{\tau}\right), \text{softmax}\left(\frac{z^s}{\tau}\right) \right),
\end{equation}
where  \( z^t \) and \( z^s \) represent the logits outputs of the teacher and student models, respectively. The temperature parameter \( \tau \) plays a pivotal role in moderating the degree of softness in the probability distributions. The Kullback-Leibler (KL) divergence, denoted as \( KL \), is utilized to measure the divergence between these probability distributions.

In our approach, the discrete alignment employs a Mean Squared Error (MSE) constraint, characterized as a hard constraint. In contrast, for continuous alignment, we apply a soft constraint. The primary objective of this soft constraint is to enable the student model to learn the probability distribution of the teacher model. This methodology of applying a soft constraint is particularly advantageous for model compression, allowing for a more efficient distillation process while preserving the essential characteristics of the teacher model’s predictive distribution. 

\textbf{Final Output Alignment.}
In the second segment, the alignment of the final outputs is meticulously orchestrated by matching the outputs of both the teacher and student models, as well as aligning them with the true values. This crucial step ensures a harmonized and coherent relationship among the student model, the teacher model, and the actual data. The alignment loss is mathematically formulated as 
\begin{equation}
\begin{aligned}
    L_{out}=&\|\mathcal{F}^s(x_i, \mathcal{Z}_i^s; \Theta^s) - \mathcal{F}^t(x_i, \mathcal{Z}_i^t; \Theta^t) \|^2 \\
    +&\|\mathcal{F}^s(x_i, \mathcal{Z}_i^s; \Theta^s) - X_i \|^2 + \|\mathcal{F}^t(x_i, \mathcal{Z}_i^t; \Theta^t) - X_i \|^2,
\end{aligned}
\end{equation}
where \(X_i\) denotes the intensity of the original pixel in the input. Building upon this framework, the total loss function of our proposed methodology is encapsulated as
\begin{equation}
    L_{total}=L_{code} + \beta_1 L_{KD} + \beta_2 L_{out},
\end{equation}
where \(\beta\) represents the weighting coefficients assigned to each constituent loss function. Our training regimen commences with the initial training of the codebook and the teacher model. Subsequently, we harness the knowledge distillation technique to refine the training of the student model and further fine-tune the teacher model.

\section{Experiments and Results}

\subsection{Implementation Details}
\label{details:sec}
\textbf{Experiment Settings}
Our teacher model was trained on a cluster equipped with 8 NVIDIA RTX 3090 GPUs, with each GPU handling a batch size of 4. We employed three feature extractors, all pre-trained ResNet-50 models, which were kept frozen during the training to extract features of dimension 512. The wavelet transform was implemented using the OpenCV library, and the codebook size was set to 256. The teacher model underwent 800 epochs of training using the AdamW optimizer with an initial learning rate of 1e-5, optimized with a cosine annealing learning rate strategy.

For the student model, we focused on extracting knowledge from the teacher model, using an active ResNet-50 as the feature extractor for the original images. The total network parameters were kept at half the size of the teacher model. The student model underwent fine-tuning for 400 epochs on data blocks learned by the teacher model. The training process was managed using the AdamW optimizer, with an initial learning rate of 1e-5. Detailed information regarding hyperparameter settings and specific regularization techniques will be provided in the supplementary materials and our code repository. During decoding, we needed to additionally save codebook information and indices, hence the computation of the compression ratio includes these two parts.

\textbf{Dataset.}
We evaluated the effectiveness of our method using CT and EM datasets. The CT data was sourced from the Medical Segmentation Decathlon (MSD) \cite{antonelli2022medical}, comprising CT scans of three organs: Liver, Colon, and Spleen. Each voxel typically measures 1.00$\times$1.00$\times$2.5 mm, suitable for medium-level detail organ recognition. These organs are represented by 201, 190, and 61 high-precision medical slices respectively, and we cropped each slice into 64$\times$512$\times$512 blocks centered for network input.

On the other hand, for the biomedical dataset, we used the CREMI dataset \cite{heinrich2018synaptic}, comprising serial section Transmission Electron Microscopy (ssTEM) images of Drosophila neural circuits. Since electron microscopy is highly sensitive to noise, these images contain many artifacts and missing sections. The voxel resolution is 4$\times$4$\times$40 nm, capturing the complex details necessary for neural structure reconstruction. The dataset includes three samples, each containing a volume of 125$\times$1250$\times$1250 voxels. Similarly, we selected a cropping size of 64$\times$512$\times$512.

\textbf{Baseline Methods.}
\label{baseline}
To quantitatively assess our model's performance, we employed two widely used metrics: Structural Similarity Index Measure (SSIM) and Peak Signal-to-Noise Ratio (PSNR). SSIM evaluates structural similarity, while PSNR measures pixel-level similarity, inversely related to Mean Squared Error (MSE). Our approach was benchmarked against traditional standards and neural representation methods, comparing it with state-of-the-art techniques.

We implemented standard codecs using FFMpeg and HM, including JPEG \cite{wallace1991jpeg}, JPEG XL \cite{alakuijala2019jpeg}, H.264 \cite{wiegand2003overview}, and HEVC \cite{sullivan2012overview}. Additionally, we evaluated data-driven models like DVC \cite{lu2019dvc} and SSF \cite{agustsson2020scale} with default settings. In neural representations, we considered SIREN \cite{sitzmann2020implicit}, NeRF \cite{martin2021nerf}, NeRV \cite{chen2021nerv}, and TINC \cite{yang2023tinc}. A key aspect of our analysis was highlighting neural representation techniques for precise compression ratio tuning. We used distortion curves and tested the CT dataset at compression ratios of 64$\times$, 128$\times$, 256$\times$, 512$\times$, and 1024$\times$, and the EM dataset at ratios of 4$\times$, 8$\times$, 12$\times$, 16$\times$, and 20$\times$.

For specific parameters, FFMpeg was used for JPEG with -q:v 2 for high quality and -q:v 10 for lower quality. For JPEG XL, we used -q 90 for high quality and -q 50 for lower quality. H.264 was encoded with FFMpeg at -crf 23 for medium quality and -crf 28 for lower quality. HEVC was configured in HM with -q 32 for medium quality and -q 37 for lower quality. These settings ensured a fair comparison of compression ratio and image quality.

\subsection{Results}
Our experimental evaluation is divided into two distinct scenarios: within-domain and cross-domain knowledge distillation, to comprehensively evaluate the effectiveness of our proposed method.
\begin{figure}[t]
    \centering
    \includegraphics[width=0.8\textwidth]{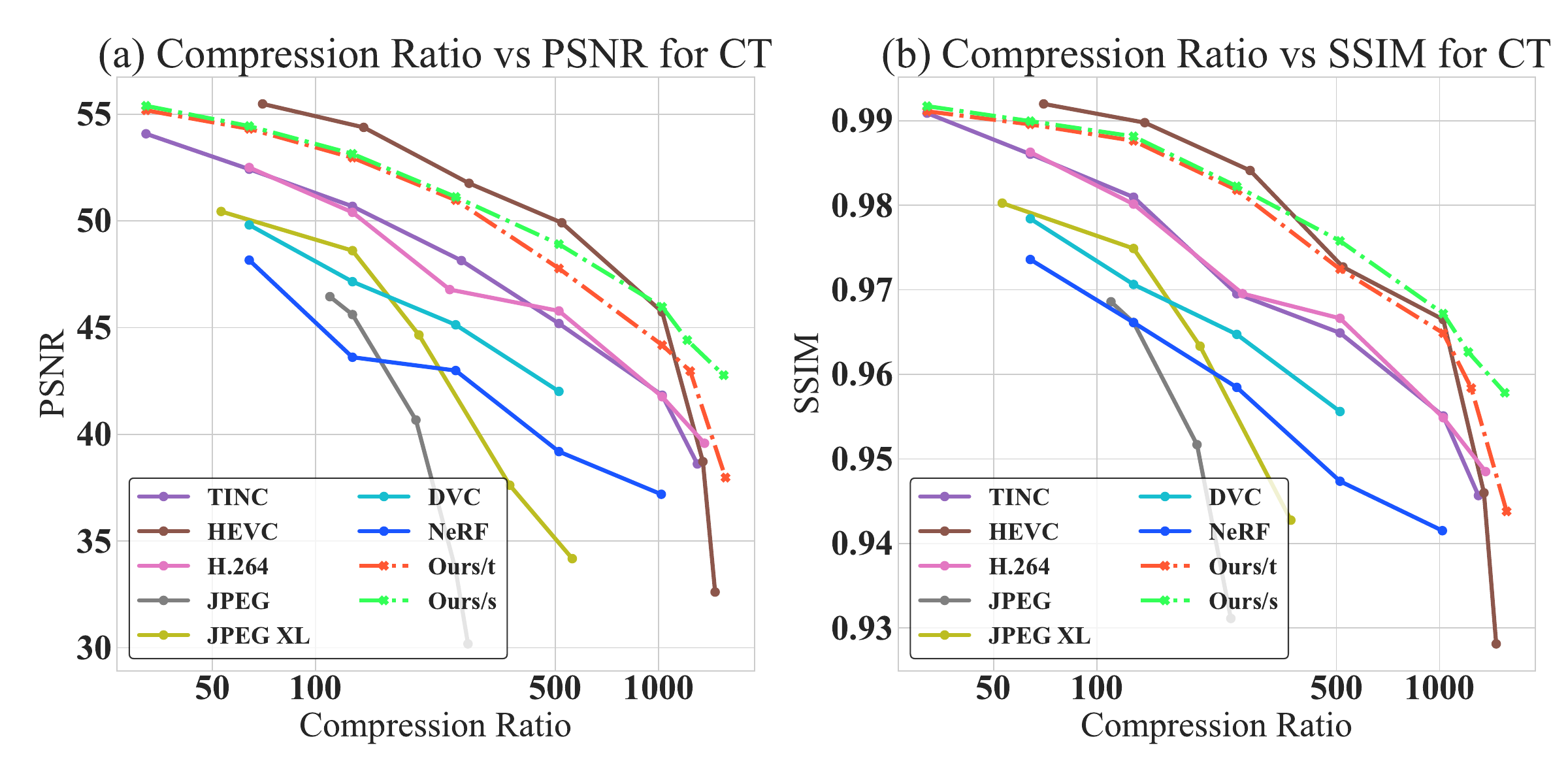}
    \vspace{-0.3cm}
    \label{fig:ct_compress}
    \vspace{-0.4cm} 
    \includegraphics[width=0.8\textwidth]{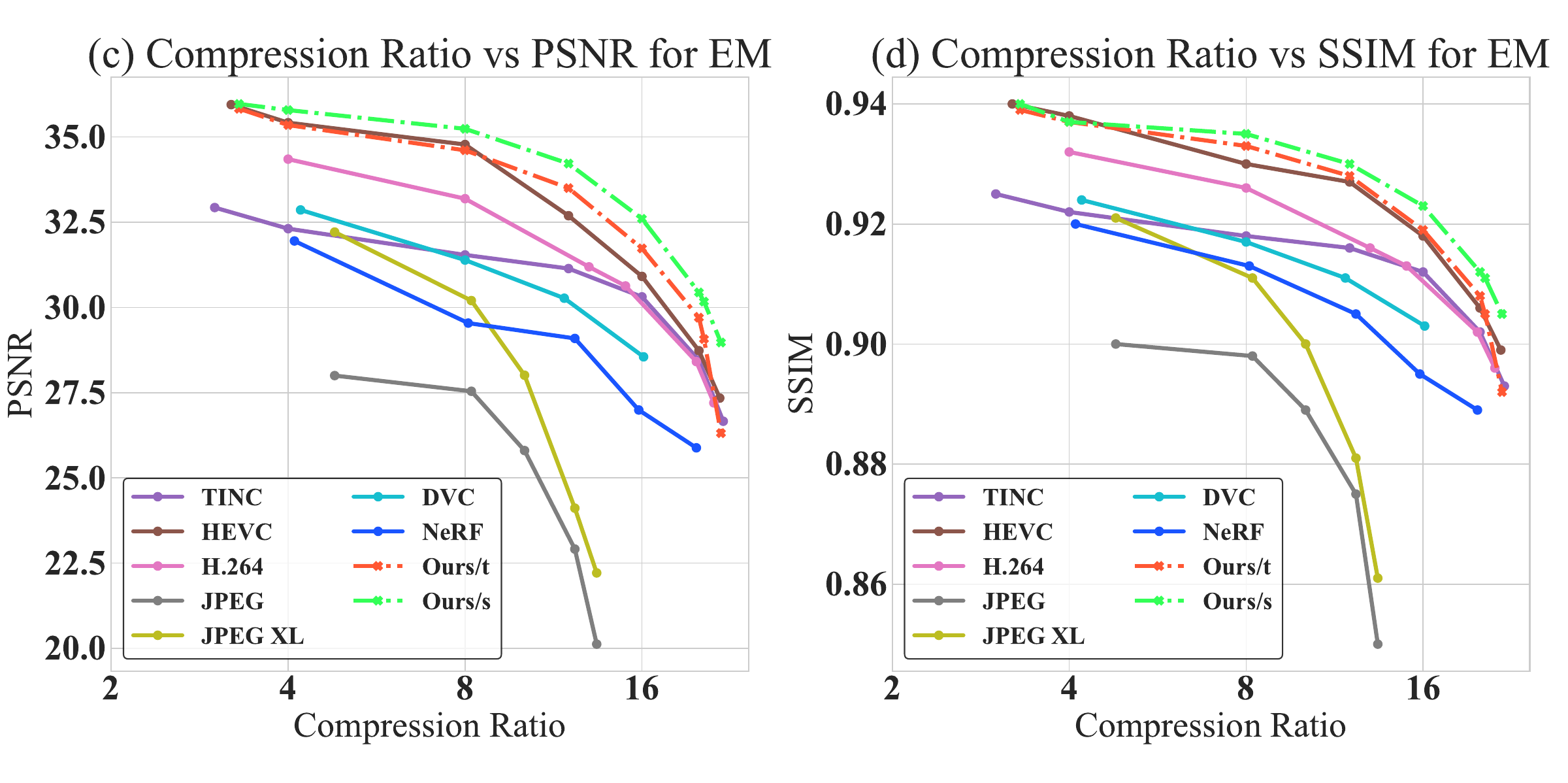}
    \caption{Distortion curves of Within-Domain Compression. The INR-based approach enables precise control over compression ratios, while other methods approximate using similar rates. `ours/t' denotes our teacher model, and `ours/s' represents our student model.}
    \label{fig:em_compress}
\end{figure}

\textbf{Within-Domain Results.} In our within-domain experiments, the teacher model is trained and distilled within selected CT or EM datasets. Initially, the teacher model is trained on a specific subset of the dataset, encompassing all inherent structural information of the dataset. This robust model is then utilized to distill knowledge into the student model, which is subsequently fine-tuned on similar types of datasets. The distillation process leverages the complex details absorbed by the teacher model, with a special emphasis on preserving key high-resolution features critical for reconstruction tasks. The performance of the student model is benchmarked against the teacher model, with a particular focus on the preservation of detailed structural information and compression efficiency. Our experimental results are presented in Table \ref{fig:main1}, and the performance curves across various compression ratios are shown in Figure \ref{fig:em_compress}.

In the CT dataset, our approach achieved the best performance among similar INR methods and is comparable to commercial methods like HEVC. The performance of the model significantly improved after knowledge distillation. Notably, our approach is more resistant to distortion at high compression ratios, whereas HEVC experiences a drastic performance drop beyond a 1000$\times$ compression ratio. In the CREMI dataset, our method surpassed HEVC, and we observed a significant decline in the performance of INR-based approaches like TINC, likely due to their network structures causing overly smooth image recovery. This is unreasonable for EM images rich in high-frequency information. Our method, by introducing frequency information as a prior, mitigates this over-smoothing issue inherent in INR. Our visualization results are displayed in Figure \ref{fig:CREMI}.
\begin{table*}[t]
\centering
\caption{Within-Domain Results. The label `w/o KD' denotes baseline teacher models without knowledge distillation, while `w/ KD' signifies student models post-knowledge distillation. The data in the table represent average results at varying compression ratios. Values in \textbf{bold} and \underline{underlined} formats indicate the top and second-best performances, respectively.
}
\fontsize{9}{9}\selectfont
\begin{tabular}{rcccccccc}
\toprule[1.2pt]
\multirow{2}{*}{Method} & \multicolumn{2}{c}{Liver} & \multicolumn{2}{c}{Colon} & \multicolumn{2}{c}{Spleen} & \multicolumn{2}{c}{CREMI} \\ \cline{2-9} 
                        & PSNR       & SSIM        & PSNR        & SSIM        & PSNR        & SSIM         & PSNR        & SSIM        \\ \midrule
TINC \cite{yang2023tinc}            & 45.81      & 0.9756      & 45.03       & 0.9744      & 45.35       & 0.9709       & 27.99       & 0.8427      \\
SIREN \cite{sitzmann2020implicit}          & 43.96      & 0.9602      & 44.54       & 0.9742      & 44.49       & 0.9689       & 28.54       & 0.8444      \\
NeRF \cite{martin2021nerf}       & 42.82      & 0.9528      & 39.37       & 0.9686      & 45.23       & 0.9740       & 27.70       & 0.8469      \\
NeRV \cite{chen2021nerv}           & 42.96      & 0.9546      & 39.91       & 0.9713      & 44.38       & 0.9700       & 26.71       & 0.8403      \\
JPEG \cite{wallace1991jpeg}           & 41.85      & 0.9654      & 37.36       & 0.9632      & 42.60       & 0.9610       & 25.26       & 0.8250      \\
JPEG XL \cite{alakuijala2019jpeg}                & 45.99      & 0.9773      & 44.12       & 0.9700      & 44.43       & 0.9645       & 28.71       & 0.8499      \\
H.264 \cite{wiegand2003overview}           & 44.03      & 0.9634      & 42.55       & 0.9724      & 45.41       & 0.9722       & 29.96       & 0.8478      \\
HEVC \cite{sullivan2012overview}            & \textbf{47.28}      & \textbf{0.9835}      & \textbf{48.21}       & \textbf{0.9851}      & \textbf{46.41}       & \textbf{0.9769}       & \underline{30.41}       & \underline{0.8547}      \\
DVC \cite{lu2019dvc}            & 42.48      & 0.9533      & 37.58       & 0.9664      & 43.34       & 0.9663       & 26.46       & 0.8360      \\
SSF \cite{agustsson2020scale}            & 45.05      & 0.9705      & 40.93       & 0.9714      & 45.42       & 0.9639       & 28.03       & 0.8449      \\ \midrule \midrule
UniCompress(w/o KD)     & 46.69      & 0.9791      & 46.90       & 0.9752      & 45.46       & 0.9730       & 30.25       & 0.8542      \\
UniCompress(w/ KD)     & \underline{47.08}      & \underline{0.9811}      & \underline{47.79}       & \underline{0.9812}      & \underline{46.32}       & \underline{0.9758}       & \textbf{31.02}       & \textbf{0.8601}      \\ \bottomrule[1.2pt]
\end{tabular}

\label{fig:main1}
\end{table*}


\begin{table}[t]
\centering
\caption{Cross-Domain Distillation Results. w/o KD: baseline teacher models. w/ KD: distilled student models. Target and source specify the respective data domains.}
\fontsize{9}{9}\selectfont
\renewcommand\tabcolsep{2.5pt}
    \begin{tabular}{rrcccc}
    \toprule[1.2pt]
    \multirow{2}{*}{Source} & \multirow{2}{*}{Target} & \multicolumn{2}{c}{PSNR} & \multicolumn{2}{c}{SSIM} \\ \cline{3-6} 
                            &                         & w/o KD      & w/ KD  & w/o KD      & w/ KD  \\ \midrule
    Colon                   & Liver                   & 46.69       & \cellcolor{lightgray}47.47 \gain{0.78}   & 0.9791      & \cellcolor{lightgray}0.9834 \gain{0.0043}  \\
    Spleen                  & Liver                   & 46.69       & \cellcolor{lightgray}47.12 \gain{0.43}   & 0.9791      & \cellcolor{lightgray}0.9821 \gain{0.0030}  \\
    Spleen                  & Colon                   & 46.90       & \cellcolor{lightgray}47.34 \gain{0.44}   & 0.9752      & \cellcolor{lightgray}0.9801 \gain{0.0049}  \\
    Liver                   & Colon                   & 46.90       & \cellcolor{lightgray}47.21 \gain{0.31}   & 0.9752      & \cellcolor{lightgray}0.9789 \gain{0.0037}  \\
    Liver                   & Spleen                  & 45.46       & \cellcolor{lightgray}45.89 \gain{0.43}   & 0.9730      & \cellcolor{lightgray}0.9741 \gain{0.0011}  \\
    Colon                   & Spleen                  & 45.46       & \cellcolor{lightgray}45.83 \gain{0.37}   & 0.9730      & \cellcolor{lightgray}0.9761 \gain{0.0031}  \\ \bottomrule[1.2pt]
    \end{tabular}
\label{tab:cross}
\end{table}

\textbf{Cross-Domain Results.}
In consideration of practical scenarios where the data requiring compression is often unknown, we ventured to extend the concept of knowledge distillation into the realm of cross-domain experiments. Specifically, we trained teacher models on datasets featuring different modalities and organs, subsequently distilling this prior knowledge into distinct organs or modalities to probe broader applicability and robustness. For our experiments, we selected CT datasets for mutual distillation, while EM datasets were excluded due to significant compression ratio differences. 

The experimental results, as shown in Table \ref{tab:cross}, illustrate the average outcomes under varying compression ratios. Intriguingly, our findings reveal that cross-domain distillation enables the student models to surpass the average performance metrics of the teacher models. Remarkably, we observed that for the Liver dataset, the gains from cross-domain distillation even outperformed those within the same domain. This was attributed to the Liver dataset containing more background elements compared to the other two datasets, which may lead to insufficient discriminative information being introduced during the initial training phase of the teacher model using only the Liver dataset, thereby rendering minimal gains from knowledge distillation. Our results imply that this methodology can rapidly adapt to new datasets, a capability crucial in practical applications.

\textbf{Comparison of Running Time.}
Traditional INR schemes, often tailored to overfit individual data points, significantly impeding encoding speed. Our method, UniCompress, achieves a 4$\sim$5 fold increase in encoding speed with only a marginal increase in decoding time. We have meticulously documented the time required for compression and decompression for each algorithm in our experiments, as presented in Supplementary Tables \ref{tab:time}. The enhanced encoding speed of UniCompress is particularly beneficial in handling large volumes of medical imagery, significantly reducing compression time, a critical advancement in medical applications.

\begin{figure}[t]
    \centering
    \includegraphics[width=\linewidth]{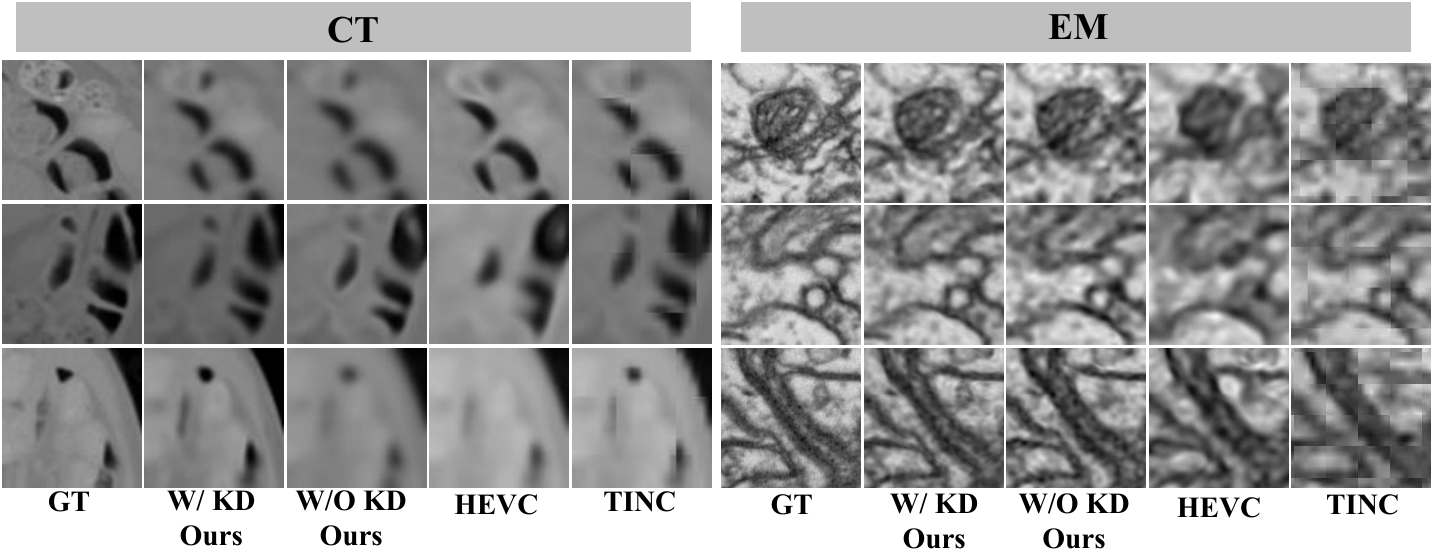}
    \vspace{-0.6cm}
    \caption{Visualization of CT and EM Image Compression. The figure displays the results of image compression for CT images at around 512$\times$ compression ratio and for EM images at around 12$\times$ compression ratio. }
    \vspace{-0.2cm}
    \label{fig:CREMI}
\end{figure}

\subsection{Ablation Studys}
\textbf{Frequency Domain Information Ablation Study.} We employed wavelet transform to treat frequency domain information as a codebook prior. To validate the efficacy of this information, we conducted an ablation study as shown in Table \ref{tab:fre}. This table displays the performance of the un-distilled teacher model. We observed that all frequency domain information contributes to certain improvements. Notably, for complex images like those in CREMI, the high-frequency information showed a significant enhancement, yielding a 0.94 dB gain in PSNR compared to the baseline method.

\textbf{Codebook Ablation Study.} In our prior knowledge, additional feature dimensions $v$ and a larger codebook size $d$ contribute to more refined discriminative features. However, due to the need to store the codebook and indices at the decoding end, an excessively large codebook size can lead to a reduction in compression ratio. For downstream compression tasks, there exists a certain trade-off in setting these parameters. Therefore, we conducted an ablation study on the size of the codebook, as shown in Table \ref{tab:codebook}. We tuned parameters that are suitable across two different modalities and found that a larger codebook size is more appropriate for complex EM data.

\textbf{Ablation Study on knowledge distillation.}
We employ discrete codebook features, continuous intermediate features, and the final output to constrain the knowledge distillation process. An ablation study was conducted on these three components, with the results presented in Table \ref{tab:knowledge_distillation}. We observed that constraints on the final output and the discrete codebook have a more significant impact on the results. This could be attributed to these components' ability to capture critical aspects of the learned representations, such as preserving crucial information through the distillation process and ensuring that the distilled model retains the essential characteristics of the teacher model. This emphasizes the importance of these constraints in achieving effective knowledge transfer.

\section{Conclusion}
In this study, we introduced ``UniCompress'', a novel approach for the concurrent compression of multiple medical image datasets, integrating wavelet transforms with INR networks. Our method effectively encodes both spatial and frequency domain information, significantly enhancing compression ratios and data integrity preservation. Additionally, our technique of knowledge distillation demonstrates considerable improvement in compression speed and cross-domain applicability. Our experiments on complex datasets like CREMI and MSD show that UniCompress achieves substantial improvements over existing methods in terms of compression performance and efficiency, laying the groundwork for future advances in medical image processing.

\clearpage

\begin{thebibliography}{10}

\bibitem{agustsson2020scale}
Eirikur Agustsson, David Minnen, Nick Johnston, Johannes Balle, Sung~Jin Hwang, and George Toderici.
\newblock Scale-space flow for end-to-end optimized video compression.
\newblock In {\em Proceedings of the IEEE/CVF Conference on Computer Vision and Pattern Recognition}, pages 8503--8512, 2020.

\bibitem{alakuijala2019jpeg}
Jyrki Alakuijala, Ruud Van~Asseldonk, Sami Boukortt, Martin Bruse, Iulia-Maria Comșa, Moritz Firsching, Thomas Fischbacher, Evgenii Kliuchnikov, Sebastian Gomez, Robert Obryk, et~al.
\newblock Jpeg xl next-generation image compression architecture and coding tools.
\newblock In {\em Applications of Digital Image Processing XLII}, volume 11137, pages 112--124. SPIE, 2019.

\bibitem{antonelli2022medical}
Michela Antonelli, Annika Reinke, Spyridon Bakas, Keyvan Farahani, Annette Kopp-Schneider, Bennett~A Landman, Geert Litjens, Bjoern Menze, Olaf Ronneberger, Ronald~M Summers, et~al.
\newblock The medical segmentation decathlon.
\newblock {\em Nature communications}, 13(1):4128, 2022.

\bibitem{bruylants2015wavelet}
Tim Bruylants, Adrian Munteanu, and Peter Schelkens.
\newblock Wavelet based volumetric medical image compression.
\newblock {\em Signal processing: Image communication}, 31:112--133, 2015.

\bibitem{byra2023exploring}
Michal Byra, Charissa Poon, Muhammad~Febrian Rachmadi, Matthias Schlachter, and Henrik Skibbe.
\newblock Exploring the performance of implicit neural representations for brain image registration.
\newblock {\em Scientific Reports}, 13(1):17334, 2023.

\bibitem{cetin2023attri}
Irem Cetin, Maialen Stephens, Oscar Camara, and Miguel A~Gonz{\'a}lez Ballester.
\newblock Attri-vae: Attribute-based interpretable representations of medical images with variational autoencoders.
\newblock {\em Computerized Medical Imaging and Graphics}, 104:102158, 2023.

\bibitem{chen2021nerv}
Hao Chen, Bo~He, Hanyu Wang, Yixuan Ren, Ser~Nam Lim, and Abhinav Shrivastava.
\newblock Nerv: Neural representations for videos.
\newblock {\em Advances in Neural Information Processing Systems}, 34:21557--21568, 2021.

\bibitem{chen2023learning}
Yinda Chen, Wei Huang, Xiaoyu Liu, Qi~Chen, and Zhiwei Xiong.
\newblock Learning multiscale consistency for self-supervised electron microscopy instance segmentation.
\newblock In {\em IEEE International Conference on Acoustics, Speech, and Signal Processing}, 2023.

\bibitem{chen2023self}
Yinda Chen, Wei Huang, Shenglong Zhou, Qi~Chen, and Zhiwei Xiong.
\newblock Self-supervised neuron segmentation with multi-agent reinforcement learning.
\newblock In {\em Proceedings of the Thirty-Second International Joint Conference on Artificial Intelligence}, pages 609--617, 2023.

\bibitem{chen2023generative}
Yinda Chen, Che Liu, Wei Huang, Sibo Cheng, Rossella Arcucci, and Zhiwei Xiong.
\newblock Generative text-guided 3d vision-language pretraining for unified medical image segmentation.
\newblock {\em arXiv preprint arXiv:2306.04811}, 2023.

\bibitem{chen2024bimcv}
Yinda Chen, Che Liu, Xiaoyu Liu, Rossella Arcucci, and Zhiwei Xiong.
\newblock Bimcv-r: A landmark dataset for 3d ct text-image retrieval.
\newblock {\em arXiv preprint arXiv:2403.15992}, 2024.

\bibitem{duan2023qarv}
Zhihao Duan, Ming Lu, Jack Ma, Yuning Huang, Zhan Ma, and Fengqing Zhu.
\newblock Qarv: Quantization-aware resnet vae for lossy image compression.
\newblock {\em IEEE Transactions on Pattern Analysis and Machine Intelligence}, 2023.

\bibitem{esser2021taming}
Patrick Esser, Robin Rombach, and Bjorn Ommer.
\newblock Taming transformers for high-resolution image synthesis.
\newblock In {\em Proceedings of the IEEE/CVF conference on computer vision and pattern recognition}, pages 12873--12883, 2021.

\bibitem{guo2023compression}
Zongyu Guo, Gergely Flamich, Jiajun He, Zhibo Chen, and Jos{\'e}~Miguel Hern{\'a}ndez-Lobato.
\newblock Compression with bayesian implicit neural representations.
\newblock {\em arXiv preprint arXiv:2305.19185}, 2023.

\bibitem{heinrich2018synaptic}
Larissa Heinrich, Jan Funke, Constantin Pape, Juan Nunez-Iglesias, and Stephan Saalfeld.
\newblock Synaptic cleft segmentation in non-isotropic volume electron microscopy of the complete drosophila brain.
\newblock In {\em Medical Image Computing and Computer Assisted Intervention--MICCAI 2018: 21st International Conference, Granada, Spain, September 16-20, 2018, Proceedings, Part II 11}, pages 317--325. Springer, 2018.

\bibitem{kazerouni2024incode}
Amirhossein Kazerouni, Reza Azad, Alireza Hosseini, Dorit Merhof, and Ulas Bagci.
\newblock Incode: Implicit neural conditioning with prior knowledge embeddings.
\newblock In {\em Proceedings of the IEEE/CVF Winter Conference on Applications of Computer Vision}, pages 1298--1307, 2024.

\bibitem{lee2020biobert}
Jinhyuk Lee, Wonjin Yoon, Sungdong Kim, Donghyeon Kim, Sunkyu Kim, Chan~Ho So, and Jaewoo Kang.
\newblock Biobert: a pre-trained biomedical language representation model for biomedical text mining.
\newblock {\em Bioinformatics}, 36(4):1234--1240, 2020.

\bibitem{li2023blip}
Junnan Li, Dongxu Li, Silvio Savarese, and Steven Hoi.
\newblock Blip-2: Bootstrapping language-image pre-training with frozen image encoders and large language models.
\newblock {\em arXiv preprint arXiv:2301.12597}, 2023.

\bibitem{liu2018multi}
Pengju Liu, Hongzhi Zhang, Kai Zhang, Liang Lin, and Wangmeng Zuo.
\newblock Multi-level wavelet-cnn for image restoration.
\newblock In {\em Proceedings of the IEEE conference on computer vision and pattern recognition workshops}, pages 773--782, 2018.

\bibitem{lu2019dvc}
Guo Lu, Wanli Ouyang, Dong Xu, Xiaoyun Zhang, Chunlei Cai, and Zhiyong Gao.
\newblock Dvc: An end-to-end deep video compression framework.
\newblock In {\em Proceedings of the IEEE/CVF Conference on Computer Vision and Pattern Recognition}, pages 11006--11015, 2019.

\bibitem{martin2021nerf}
Ricardo Martin-Brualla, Noha Radwan, Mehdi~SM Sajjadi, Jonathan~T Barron, Alexey Dosovitskiy, and Daniel Duckworth.
\newblock Nerf in the wild: Neural radiance fields for unconstrained photo collections.
\newblock In {\em Proceedings of the IEEE/CVF Conference on Computer Vision and Pattern Recognition}, pages 7210--7219, 2021.

\bibitem{mo2024password}
Yuhong Mo, Shaojie Li, Yushan Dong, Ziyi Zhu, and Zhenglin Li.
\newblock Password complexity prediction based on roberta algorithm.
\newblock {\em Applied Science and Engineering Journal for Advanced Research}, 3(3):1--5, 2024.

\bibitem{pumarola2021d}
Albert Pumarola, Enric Corona, Gerard Pons-Moll, and Francesc Moreno-Noguer.
\newblock D-nerf: Neural radiance fields for dynamic scenes.
\newblock In {\em Proceedings of the IEEE/CVF Conference on Computer Vision and Pattern Recognition}, pages 10318--10327, 2021.

\bibitem{saragadam2023wire}
Vishwanath Saragadam, Daniel LeJeune, Jasper Tan, Guha Balakrishnan, Ashok Veeraraghavan, and Richard~G Baraniuk.
\newblock Wire: Wavelet implicit neural representations.
\newblock In {\em Proceedings of the IEEE/CVF Conference on Computer Vision and Pattern Recognition}, pages 18507--18516, 2023.

\bibitem{shen10harnessing}
Xinyu Shen, Qimin Zhang, Huili Zheng, and Weiwei Qi.
\newblock Harnessing xgboost for robust biomarker selection of obsessive-compulsive disorder (ocd) from adolescent brain cognitive development (abcd) data.

\bibitem{sitzmann2020implicit}
Vincent Sitzmann, Julien Martel, Alexander Bergman, David Lindell, and Gordon Wetzstein.
\newblock Implicit neural representations with periodic activation functions.
\newblock {\em Advances in neural information processing systems}, 33:7462--7473, 2020.

\bibitem{sitzmann2019scene}
Vincent Sitzmann, Michael Zollh{\"o}fer, and Gordon Wetzstein.
\newblock Scene representation networks: Continuous 3d-structure-aware neural scene representations.
\newblock {\em Advances in Neural Information Processing Systems}, 32, 2019.

\bibitem{sullivan2012overview}
Gary~J Sullivan, Jens-Rainer Ohm, Woo-Jin Han, and Thomas Wiegand.
\newblock Overview of the high efficiency video coding (hevc) standard.
\newblock {\em IEEE Transactions on circuits and systems for video technology}, 22(12):1649--1668, 2012.

\bibitem{pmlr-v216-tang23a}
Zhiwei Tang, Tsung-Hui Chang, Xiaojing Ye, and Hongyuan Zha.
\newblock Low-rank matrix recovery with unknown correspondence.
\newblock In Robin~J. Evans and Ilya Shpitser, editors, {\em Proceedings of the Thirty-Ninth Conference on Uncertainty in Artificial Intelligence}, volume 216 of {\em Proceedings of Machine Learning Research}, pages 2111--2122. PMLR, 31 Jul--04 Aug 2023.

\bibitem{tang2024zerothorder}
Zhiwei Tang, Dmitry Rybin, and Tsung-Hui Chang.
\newblock Zeroth-order optimization meets human feedback: Provable learning via ranking oracles.
\newblock In {\em The Twelfth International Conference on Learning Representations}, 2024.

\bibitem{tang2024z}
Zhiwei Tang, Yanmeng Wang, and Tsung-Hui Chang.
\newblock z-signfedavg: A unified stochastic sign-based compression for federated learning.
\newblock In {\em Proceedings of the AAAI Conference on Artificial Intelligence}, volume~38, pages 15301--15309, 2024.

\bibitem{van2017neural}
Aaron Van Den~Oord, Oriol Vinyals, et~al.
\newblock Neural discrete representation learning.
\newblock {\em Advances in neural information processing systems}, 30, 2017.

\bibitem{villasenor1995wavelet}
John~D Villasenor, Benjamin Belzer, and Judy Liao.
\newblock Wavelet filter evaluation for image compression.
\newblock {\em IEEE Transactions on image processing}, 4(8):1053--1060, 1995.

\bibitem{wallace1991jpeg}
Gregory~K Wallace.
\newblock The jpeg still picture compression standard.
\newblock {\em Communications of the ACM}, 34(4):30--44, 1991.

\bibitem{wang2019private}
Ji~Wang, Weidong Bao, Lichao Sun, Xiaomin Zhu, Bokai Cao, and S~Yu Philip.
\newblock Private model compression via knowledge distillation.
\newblock In {\em Proceedings of the AAAI Conference on Artificial Intelligence}, volume~33, pages 1190--1197, 2019.

\bibitem{wiegand2003overview}
Thomas Wiegand, Gary~J Sullivan, Gisle Bjontegaard, and Ajay Luthra.
\newblock Overview of the h. 264/avc video coding standard.
\newblock {\em IEEE Transactions on circuits and systems for video technology}, 13(7):560--576, 2003.

\bibitem{xue2023dbvc}
Dongmei Xue, Haichuan Ma, Li~Li, Dong Liu, Zhiwei Xiong, and Houqiang Li.
\newblock Dbvc: An end-to-end 3-d deep biomedical video coding framework.
\newblock {\em IEEE Transactions on Circuits and Systems for Video Technology}, 2023.

\bibitem{yang2023tinc}
Runzhao Yang.
\newblock Tinc: Tree-structured implicit neural compression.
\newblock In {\em Proceedings of the IEEE/CVF Conference on Computer Vision and Pattern Recognition}, pages 18517--18526, 2023.

\bibitem{yang2023sci}
Runzhao Yang, Tingxiong Xiao, Yuxiao Cheng, Qianni Cao, Jinyuan Qu, Jinli Suo, and Qionghai Dai.
\newblock Sci: A spectrum concentrated implicit neural compression for biomedical data.
\newblock In {\em Proceedings of the AAAI Conference on Artificial Intelligence}, volume~37, pages 4774--4782, 2023.

\bibitem{yang2022sharing}
Runzhao Yang, Tingxiong Xiao, Yuxiao Cheng, Anan Li, Jinyuan Qu, Rui Liang, Shengda Bao, Xiaofeng Wang, Jue Wang, Jinli Suo, et~al.
\newblock Sharing massive biomedical data at magnitudes lower bandwidth using implicit neural function.
\newblock {\em bioRxiv}, pages 2022--12, 2022.

\bibitem{zhang2024deepgi}
Ye~Zhang, Yulu Gong, Dongji Cui, Xinrui Li, and Xinyu Shen.
\newblock Deepgi: An automated approach for gastrointestinal tract segmentation in mri scans.
\newblock {\em arXiv preprint arXiv:2401.15354}, 2024.

\bibitem{zhang2021generator}
Yunlong Zhang, Chenxin Li, Xin Lin, Liyan Sun, Yihong Zhuang, Yue Huang, Xinghao Ding, Xiaoqing Liu, and Yizhou Yu.
\newblock Generator versus segmentor: Pseudo-healthy synthesis.
\newblock In {\em Medical Image Computing and Computer Assisted Intervention--MICCAI 2021: 24th International Conference, Strasbourg, France, September 27--October 1, 2021, Proceedings, Part VI 24}, pages 150--160. Springer, 2021.

\bibitem{zhou2023xnet}
Yanfeng Zhou, Jiaxing Huang, Chenlong Wang, Le~Song, and Ge~Yang.
\newblock Xnet: Wavelet-based low and high frequency fusion networks for fully-and semi-supervised semantic segmentation of biomedical images.
\newblock In {\em Proceedings of the IEEE/CVF International Conference on Computer Vision}, pages 21085--21096, 2023.

\bibitem{zhuang2022optical}
Weihao Zhuang, Tristan Hascoet, Xunquan Chen, Ryoichi Takashima, Tetsuya Takiguchi, et~al.
\newblock Optical flow regularization of implicit neural representations for video frame interpolation.
\newblock {\em APSIPA Transactions on Signal and Information Processing}, 12(1), 2022.

\end{thebibliography}
\bibliographystyle{plain}

\clearpage
\appendix
\begin{center}
    \Large \textbf{Appendix}
\end{center}
\section{Theoretical Advantages of INR Compression}
\label{sec:mae}
\begin{assumption}\label{assump:image}
Suppose we have an image dataset $\mathcal{D}=\{x_i\}_{i=1}^n$ consisting of $n$ samples, where each sample $x_i \in \mathbb{R}^{H \times W \times C}$ is an image of size $H \times W$ with $C$ channels. We assume that these images are independently and identically distributed (i.i.d.) samples from some unknown distribution $p(x)$.
\end{assumption}

We consider using Implicit Neural Representations (INR) and Variational Autoencoders (VAE) for image compression. For INR, we define an encoder $f_{\theta}:\mathbb{R}^{H \times W \times C} \to \mathbb{R}^d$ that maps an image to a $d$-dimensional vector in the latent space, and a decoder $g_{\phi}:\mathbb{R}^d \to \mathbb{R}^{H \times W \times C}$ that maps a latent vector back to the image space. The training objective of INR is to minimize the reconstruction error:
\begin{equation}
\mathcal{L}_{\mathrm{INR}}(\theta, \phi) = \mathbb{E}_{x \sim p(x)}\left[\|x - g_\phi(f_\theta(x))\|_2^2\right].
\end{equation}

For VAE, we introduce a prior distribution $p(z)$ (usually taken to be a standard Gaussian), and define an encoder $q_{\psi}(z|x)$ and a decoder $p_{\omega}(x|z)$. The training objective of VAE is to maximize the evidence lower bound (ELBO):
\begin{equation}
\mathcal{L}_{\mathrm{VAE}}(\psi, \omega) = \mathbb{E}_{x \sim p(x)}\left[\mathbb{E}_{z \sim q_{\psi}(z \mid x)}\left[\log p_{\omega}(x \mid z)\right] - D_{\mathrm{KL}}\left(q_{\psi}(z \mid x) \| p(z)\right)\right].
\end{equation}

\begin{theorem}\label{thm:compression}
Under Assumption \ref{assump:image}, suppose INR and VAE have the same compression rate, i.e., $d_{\text{INR}} = d_{\text{VAE}} = d$. Denote the optimal parameters of INR and VAE as $(\theta^*, \phi^*)$ and $(\psi^*, \omega^*)$, respectively. Then,
\begin{equation}
\mathbb{E}_{x \sim p(x)}\left[\|x - g_{\phi^*}(f_{\theta^*}(x))\|_2^2\right] \leq \mathbb{E}_{x \sim p(x)}\left[\|x - \mathbb{E}_{z \sim q_{\psi^*}(z \mid x)}[p_{\omega^*}(x \mid z)]\|_2^2\right].
\end{equation}
In other words, under the same compression rate, the reconstruction error of INR is no greater than that of VAE.
\end{theorem}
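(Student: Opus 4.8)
The plan is to exploit the fact that the INR objective is, by construction, a direct minimization of the quantity appearing on the left-hand side, whereas the VAE optimizes a surrogate (the ELBO) whose induced reconstruction error is necessarily at least as large. Concretely, I would first observe that for any fixed VAE parameters $(\psi,\omega)$, the map $x \mapsto \mathbb{E}_{z\sim q_{\psi}(z\mid x)}[p_{\omega}(x\mid z)]$ factors as the composition of a (possibly stochastic, then averaged) encoding $x \mapsto q_{\psi}(\cdot\mid x)$ into the $d$-dimensional latent space, followed by a decoding back to $\mathbb{R}^{H\times W\times C}$. Under the same compression budget $d_{\text{INR}}=d_{\text{VAE}}=d$, this composition lies in the same function class over which the INR encoder–decoder pair $(f_\theta,g_\phi)$ ranges — i.e. all functions $\mathbb{R}^{H\times W\times C}\to\mathbb{R}^{H\times W\times C}$ that bottleneck through a $d$-dimensional latent. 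Hence the VAE-induced reconstruction map is a feasible point for the INR optimization problem.

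The second step is then immediate by optimality: since $(\theta^*,\phi^*)$ minimizes $\mathcal{L}_{\mathrm{INR}}(\theta,\phi) = \mathbb{E}_{x}[\|x - g_\phi(f_\theta(x))\|_2^2]$ over the full class, and the VAE's averaged reconstruction map $x\mapsto \mathbb{E}_{z\sim q_{\psi^*}(z\mid x)}[p_{\omega^*}(x\mid z)]$ is one particular element of that class, we get
\begin{equation}
\mathbb{E}_{x\sim p(x)}\left[\|x - g_{\phi^*}(f_{\theta^*}(x))\|_2^2\right] \;\le\; \mathbb{E}_{x\sim p(x)}\left[\Big\|x - \mathbb{E}_{z\sim q_{\psi^*}(z\mid x)}[p_{\omega^*}(x\mid z)]\Big\|_2^2\right],
\end{equation}
which is exactly the claimed inequality. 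A small auxiliary remark worth spelling out: one should note that the best $L_2$ predictor of $x$ from a latent code is the conditional mean, so replacing the VAE's stochastic decoder by its posterior-averaged output can only decrease squared error — this is why the comparison is stated in terms of $\mathbb{E}_{z\sim q_{\psi^*}(z\mid x)}[p_{\omega^*}(x\mid z)]$ rather than a single stochastic draw.

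The main obstacle is making precise the claim that the two hypothesis classes coincide (or that the VAE's class embeds into the INR's). This requires being careful about what "same compression rate" buys us: the INR decoder $g_\phi$ is deterministic, while the VAE decoder $p_\omega(x\mid z)$ is a conditional density, and the encoders differ in the same way. The clean way to handle this is to argue at the level of the \emph{effective} reconstruction map — define $R_{\mathrm{VAE}}(x) := \mathbb{E}_{z\sim q_{\psi^*}(z\mid x)}[p_{\omega^*}(x\mid z)]$ and exhibit explicit INR parameters realizing $g_\phi\circ f_\theta = R_{\mathrm{VAE}}$ (taking $f_\theta$ to encode the posterior mean or the posterior itself, and $g_\phi$ to perform the averaged decoding), so that $R_{\mathrm{VAE}}$ is literally achievable by the INR architecture under the $d$-dimensional bottleneck. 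Once that embedding is established, the rest is a one-line optimality argument; the delicacy is entirely in stating the architectural assumptions under which the embedding holds (sufficiently expressive $f_\theta, g_\phi$), which I would fold into the theorem's hypotheses or a preceding remark.
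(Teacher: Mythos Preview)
Your approach is genuinely different from the paper's, and in spirit is the more natural one. The paper does not use an optimality/feasibility comparison at all: it expands both squared errors as $\|x\|_2^2 - 2\langle x,\cdot\rangle + \|\cdot\|_2^2$ and then compares the cross terms and the squared-norm terms separately, upper-bounding the VAE cross term via Cauchy--Schwarz and Jensen and lower-bounding the INR cross term through a maximal-angle quantity $\angle_{\max}$. Importantly, the paper's argument does not deliver the unconditional inequality as stated in the theorem; it only derives a pair of \emph{sufficient conditions} (roughly, that INR reconstructions have higher cosine similarity with $x$ and smaller expected squared norm than VAE samples) under which the conclusion holds. So what your route would buy, if it can be made to work, is the unconditional statement rather than a conditional one.

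That said, there is a real gap in your embedding step that expressivity assumptions alone do not close. The map $R_{\mathrm{VAE}}(x)=\mathbb{E}_{z\sim q_{\psi^*}(z\mid x)}[p_{\omega^*}(x\mid z)]$ is an integral over the whole latent space weighted by the posterior $q_{\psi^*}(\cdot\mid x)$, and that posterior generically depends on $x$ through more than $d$ real numbers (e.g.\ a Gaussian posterior carries $2d$ parameters for mean and diagonal variance). Consequently $R_{\mathrm{VAE}}$ need not factor as $g\circ f$ with a deterministic $f:\mathbb{R}^{H\times W\times C}\to\mathbb{R}^d$: encoding only the posterior mean discards the variance information needed to compute the average, while ``encoding the posterior itself'' exceeds the $d$-dimensional budget. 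This is a bottleneck-dimension obstruction, not an approximation-power one, so folding ``sufficiently expressive $f_\theta,g_\phi$'' into the hypotheses does not fix it. A detour through the \emph{sampled} VAE reconstruction also fails to reach the stated right-hand side: each draw $z\sim q_{\psi^*}(\cdot\mid x)$ does give a map in the $d$-bottleneck class, yielding $\mathcal{L}_{\mathrm{INR}}(\theta^*,\phi^*)\le \mathbb{E}_{x,z}[\|x-p_{\omega^*}(x\mid z)\|_2^2]$, but Jensen gives $\|x-\mathbb{E}_z[\cdot]\|_2^2\le \mathbb{E}_z[\|x-\cdot\|_2^2]$, which is the wrong direction for the final comparison. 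To make your argument go through you would need an additional structural assumption such as a deterministic VAE encoder, at which point the embedding is immediate.
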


\begin{proof}
For VAE, we have:
\begin{align*}
\mathbb{E}_{x \sim p(x)}\left[\|x - \mathbb{E}_{z \sim q_{\psi^*}(z \mid x)}[p_{\omega^*}(x \mid z)]\|_2^2\right] &= \mathbb{E}_{x \sim p(x)}\left[\|x\|_2^2\right] - 2\mathbb{E}_{x \sim p(x)}\left[\left\langle x, \mathbb{E}_{z \sim q_{\psi^*}(z \mid x)}[p_{\omega^*}(x \mid z)] \right\rangle \right] \\
&\quad + \mathbb{E}_{x \sim p(x)}\left[\|\mathbb{E}_{z \sim q_{\psi^*}(z \mid x)}[p_{\omega^*}(x \mid z)]\|_2^2\right].
\end{align*}
This follows from the law of total expectation and the fact that $\mathbb{E}[\|Y\|_2^2] \geq \|\mathbb{E}[Y]\|_2^2$ for any random vector $```markdown
Y$.

On the other hand, for INR, we have:
\begin{align*}
\mathbb{E}_{x \sim p(x)}\left[\|x - g_{\phi^*}(f_{\theta^*}(x))\|_2^2\right] &= \mathbb{E}_{x \sim p(x)}\left[\|x\|_2^2\right] - 2\mathbb{E}_{x \sim p(x)}\left[\left\langle x, g_{\phi^*}(f_{\theta^*}(x)) \right\rangle \right] \\
&\quad + \mathbb{E}_{x \sim p(x)}\left[\|g_{\phi^*}(f_{\theta^*}(x))\|_2^2\right].
\end{align*}

Now, let's compare the second terms in the above two expressions. For VAE, we have:
\begin{align*}
\mathbb{E}_{x \sim p(x), z \sim q_{\psi^*}(z \mid x)}\left[\left\langle x, p_{\omega^*}(x \mid z) \right\rangle \right] &= \mathbb{E}_{x \sim p(x), z \sim q_{\psi^*}(z \mid x)}\left[\left\langle x, \mathbb{E}_{x' \sim p_{\omega^*}(x' \mid z)}[x'] \right\rangle \right] \\
&\leq \mathbb{E}_{x \sim p(x), z \sim q_{\psi^*}(z \mid x)}\left[\|x\|_2 \cdot \|\mathbb{E}_{x' \sim p_{\omega^*}(x' \mid z)}[x']\|_2 \right] \\
&\leq \mathbb{E}_{x \sim p(x), z \sim q_{\psi^*}(z \mid x)}\left[\|x\|_2 \cdot \mathbb{E}_{x' \sim p_{\omega^*}(x' \mid z)}[\|x'\|_2] \right] \\
&= \mathbb{E}_{x \sim p(x)}\left[\|x\|_2 \cdot \mathbb{E}_{z \sim q_{\psi^*}(z \mid x)}\left[\mathbb{E}_{x' \sim p_{\omega^*}(x' \mid z)}[\|x'\|_2] \right] \right] \\
&\leq \mathbb{E}_{x \sim p(x)}\left[\|x\|_2 \cdot \sqrt{\mathbb{E}_{z \sim q_{\psi^*}(z \mid x), x' \sim p_{\omega^*}(x' \mid z)}[\|x'\|_2^2]} \right] \\
&= \mathbb{E}_{x \sim p(x)}\left[\|x\|_2 \cdot \sqrt{\mathbb{E}_{z \sim q_{\psi^*}(z \mid x)}[\mathbb{E}_{x' \sim p_{\omega^*}(x' \mid z)}[\|x'\|_2^2]]} \right],
\end{align*}
where the last inequality follows from Jensen's inequality.

For INR, we have:
\begin{align*}
\mathbb{E}_{x \sim p(x)}\left[\left\langle x, g_{\phi^*}(f_{\theta^*}(x)) \right\rangle \right] &= \mathbb{E}_{x \sim p(x)}\left[\|x\|_2 \cdot \|g_{\phi^*}(f_{\theta^*}(x))\|_2 \cdot \cos \angle(x, g_{\phi^*}(f_{\theta^*}(x))) \right] \\
&\geq \mathbb{E}_{x \sim p(x)}\left[\|x\|_2 \cdot \|g_{\phi^*}(f_{\theta^*}(x))\|_2 \cdot \cos \angle_{\max} \right],
\end{align*}
where $\angle(x, g_{\phi^*}(f_{\theta^*}(x)))$ denotes the angle between vectors $x$ and $g_{\phi^*}(f_{\theta^*}(x))$, and $\angle_{\max}$ is the maximum possible angle between any pair of vectors in the dataset.

Comparing the bounds for VAE and INR, we observe that if
\begin{equation}
\sqrt{\mathbb{E}_{z \sim q_{\psi^*}(z \mid x)}\left[\mathbb{E}_{x' \sim p_{\omega^*}(x' \mid z)}[\|x'\|_2^2]\right]} \leq \|g_{\phi^*}(f_{\theta^*}(x))\|_2 \cdot \cos \angle_{\max},
\end{equation}
then the second term in the INR bound will be larger than that in the VAE bound, implying a smaller reconstruction error for INR.

Now, let us compare the third terms in the VAE and INR bounds. For VAE, we have:
\begin{equation}
\mathbb{E}_{x \sim p(x)}\left[\mathbb{E}_{z \sim q_{\psi^*}(z \mid x)}[\|p_{\omega^*}(x \mid z)\|_2^2]\right] = \mathbb{E}_{x \sim p(x), z \sim q_{\psi^*}(z \mid x)}\left[\mathbb{E}_{x' \sim p_{\omega^*}(x' \mid z)}[\|x'\|_2^2]\right],
\end{equation}
while for INR, we have:
\begin{equation}
\mathbb{E}_{x \sim p(x)}\left[\|g_{\phi^*}(f_{\theta^*}(x))\|_2^2\right].
\end{equation}
If the expected squared norm of the reconstructed images under INR is smaller than that under VAE, i.e.,
\begin{equation}
\mathbb{E}_{x \sim p(x)}\left[\|g_{\phi^*}(f_{\theta^*}(x))\|_2^2\right] \leq \mathbb{E}_{x \sim p(x), z \sim q_{\psi^*}(z \mid x)}\left[\mathbb{E}_{x' \sim p_{\omega^*}(x' \mid z)}[\|x'\|_2^2]\right],
\end{equation}
then the third term in the INR bound will be smaller than that in the VAE bound.

In summary, under the conditions:
\begin{align*}
&\sqrt{\mathbb{E}_{z \sim q_{\psi^*}(z \mid x)}\left[\mathbb{E}_{x' \sim p_{\omega^*}(x' \mid z)}[\|x'\|_2^2]\right]} \leq \|g_{\phi^*}(f_{\theta^*}(x))\|_2 \cdot \cos \angle_{\max}, \\
&\mathbb{E}_{x \sim p(x)}\left[\|g_{\phi^*}(f_{\theta^*}(x))\|_2^2\right] \leq \mathbb{E}_{x \sim p(x), z \sim q_{\psi^*}(z \mid x)}\left[\mathbb{E}_{x' \sim p_{\omega^*}(x' \mid z)}[\|x'\|_2^2]\right],
\end{align*}
the reconstruction error of INR will be smaller than that of VAE under the same compression rate. These conditions essentially require that the INR decoder generates images with smaller expected squared norm and higher cosine similarity to the original images compared to the VAE decoder.

\end{proof}

The key idea behind this proof is to compare the bounds on the reconstruction errors of INR and VAE by analyzing the terms in their respective expressions. The conditions derived in the proof provide insights into when INR can achieve better compression performance than VAE. Intuitively, if the INR decoder can generate images that are more similar to the original images and have smaller magnitudes compared to the VAE decoder, then INR will have a lower reconstruction error.

It is worth noting that the conditions in the proof are sufficient but not necessary for INR to outperform VAE. In practice, the actual performance of INR and VAE will depend on various factors such as the network architectures, training procedures, and the specific dataset. Nevertheless, this theoretical analysis sheds light on the potential advantages of INR over VAE in terms of image compression.

\section{Dataset Information}
Our experiments encompass three CT datasets and one Electron Microscopy (EM) dataset. Table \ref{tab:data} provides detailed information on all these datasets. To standardize the input size for our network, we uniformly cropped the datasets to volumes of 64$\times$512$\times$512. The CREMI dataset can be accessed at \url{https://cremi.org/data/}, and the MSD dataset is available at \url{http://medicaldecathlon.com/}.


\begin{table}[h]
\centering
\caption{Detailed information about the datasets used in the main experiments, including resolution, quantity, and other relevant details.}
\fontsize{8.5}{10.8}\selectfont
\renewcommand\tabcolsep{1.5pt}
\begin{tabular}{ccccccccc}
\toprule[1.2pt]
Name      & Category       & Species          & Position & Characteristic & Bit depth & w/o sign & Resolution        & Slice counts \\ \midrule
CREMI     & EM, biological & adult drosophila & brain    & anisotropy     & 8         & unsigned & 125×1250×1250     & /            \\
CT-Liver  & CT, medical    & human            & liver    & anisotropy     & 32        & signed   & $\sim$588×512×512 & 201          \\
CT-Lung   & CT, medical    & human            & lung     & anisotropy     & 32        & signed   & $\sim$241×512×512 & 96           \\
CT-Spleen & CT, medical    & human            & spleen   & anisotropy     & 32        & signed   & $\sim$94×512×512  & 61           \\ \bottomrule[1.2pt]
\end{tabular}
\label{tab:data}
\end{table}
\section{Numerical Results of the Ablation Study}
In this section, we provide the tables containing the numerical results of the ablation experiments.
\begin{table}[h]
\centering
\begin{tabular}{cc}

\begin{minipage}{0.45\textwidth}
\centering
\caption{Ablation study on frequency information.}
\fontsize{8.5}{10.8}\selectfont
\renewcommand\tabcolsep{2pt}
\begin{tabular}{lllcccc}
\toprule[1.2pt]
\multicolumn{1}{l}{\multirow{2}{*}{High}} & \multirow{2}{*}{Low} & \multirow{2}{*}{Raw} & \multicolumn{2}{c}{Spleen} & \multicolumn{2}{c}{CREMI} \\ \cline{4-7} 
\multicolumn{1}{l}{}                      &                      &                      & PSNR        & SSIM         & PSNR        & SSIM        \\ \midrule
                                           &                      & $\checkmark$        & 44.55       & 0.9671       & 28.65       & 0.8451      \\
                                           & $\checkmark$         & $\checkmark$        & 45.13       & 0.9701       & 29.07       & 0.8491      \\
$\checkmark$                               &                      & $\checkmark$        & 45.19       & 0.9712       & 29.51       & 0.8521      \\
\rowcolor[HTML]{EFEFEF} 
$\checkmark$                               & $\checkmark$         & $\checkmark$        & 45.46       & 0.9730       & 30.25       & 0.8542      \\ \bottomrule[1.2pt]
\end{tabular}
\label{tab:fre}
\end{minipage}

& 

\begin{minipage}{0.45\textwidth}
\centering
\caption{Ablation study on codebook size. `/' indicates using continuous feature as discriminative image prior.}
\fontsize{8.5}{10.8}\selectfont
\renewcommand\tabcolsep{2pt}
\begin{tabular}{llcccc}
\toprule[1.2pt]
                    &                     & \multicolumn{2}{c}{Spleen} & \multicolumn{2}{c}{CREMI} \\ \cline{3-6} 
\multirow{-2}{*}{$d$} & \multirow{-2}{*}{$v$} & PSNR        & SSIM         & PSNR        & SSIM        \\ \midrule
/                    & 512                 & 42.17       & 0.9541       & 27.89       & 0.8192      \\
128                  & 512                 & 43.32       & 0.9523       & 28.14       & 0.8287      \\
512                  & 1024                & 44.97       & 0.9651       & 30.11       & 0.8532      \\
128                  & 1024                & 43.13       & 0.9617       & 28.87       & 0.8207      \\
\rowcolor[HTML]{EFEFEF} 
256                  & 512                 & 45.46       & 0.9730       & 30.25       & 0.8542      \\ \bottomrule[1.2pt]
\end{tabular}
\label{tab:codebook}
\end{minipage}

\end{tabular}
\end{table}

\begin{table}[h]
\centering
\caption{Ablation study on knowledge distillation.}
\fontsize{8.5}{10.8}\selectfont
\renewcommand\tabcolsep{2pt}
\begin{tabular}{lllcccc}
\toprule[1.2pt]
\multirow{2}{*}{$L_{KD}$} & \multirow{2}{*}{$L_{code}$} & \multirow{2}{*}{$L_{out}$} & \multicolumn{2}{c}{Spleen} & \multicolumn{2}{c}{CREMI} \\ \cline{4-7} 
                           &                             &                            & PSNR        & SSIM         & PSNR        & SSIM        \\ \midrule
                           &                             & $\checkmark$               & 46.01       & 0.9715       & 30.79       & 0.8569      \\
                           & $\checkmark$                & $\checkmark$               & 46.27       & 0.9741       & 30.93       & 0.8591      \\
$\checkmark$               &                             & $\checkmark$               & 46.11       & 0.9722       & 30.84       & 0.8574      \\
\rowcolor[HTML]{EFEFEF}
$\checkmark$               & $\checkmark$                & $\checkmark$               & 46.32       & 0.9758       & 31.02       & 0.8601      \\ \bottomrule[1.2pt]
\end{tabular}
\label{tab:knowledge_distillation}
\end{table}

\section{Compression Efficiency}
Beyond visual metrics, the time consumed for data encoding and decoding is a crucial measure of a compression algorithm's efficacy. We evaluated the encoding and decoding times for all comparison baselines discussed in the main text. Specifically, the decoding time refers to the duration required to decode 1,000 data blocks. The results are presented in Table \ref{tab:time}. Our approach demonstrates a 4$\sim$5 times faster encoding process for both CT and EM datasets compared to common INR compression schemes like TINC and SIREN. This significant speed-up in encoding can greatly enhance efficiency in large-scale medical image compression, while incurring only a minimal increase in decoding time.
\begin{table}[h]
\centering
\fontsize{8.5}{10.8}\selectfont
\renewcommand\tabcolsep{1.5pt}
\caption{Comparison of encoding and decoding time. We categorize the methods into INR-based, commercial, and data-driven approaches. The encoding and decoding times for CT data are based on 512$\times$ compression, while for EM data, they are based on 12$\times$ compression. To emphasize differences, the decoding times are multiplied by 1000.}
\begin{tabular}{rr|cccc}

\toprule[1.2pt]
\multicolumn{2}{c|}{}                                    & \multicolumn{2}{c}{CT}                                        & \multicolumn{2}{c}{EM}                                        \\ \cline{3-6} 
\multicolumn{2}{c|}{}                                    & Compression                   & Decompression                 & Compression                   & Decompression                 \\
\multicolumn{2}{c|}{\multirow{-3}{*}{Method}}            & (Seconds)                     & x1k(seconds)                  & (Seconds)                     & x1k(seconds)                  \\ \midrule
                              & UniCompress(w/ KD) & \cellcolor[HTML]{EFEFEF}571.3 & \cellcolor[HTML]{EFEFEF}313.7 & \cellcolor[HTML]{EFEFEF}581.8 & \cellcolor[HTML]{EFEFEF}321.9 \\
                              & TINC \cite{yang2023tinc}                     & 2271                          & 241.9                         & 2672                          & 279.1                         \\
                              & SIREN \cite{sitzmann2020implicit}                    & 2013                          & 200.8                         & 2319                          & 223.7                         \\
                              & NeRV \cite{chen2021nerv}                      & 2412                          & 197.3                         & 2672                          & 201.9                         \\
\multirow{-5}{*}{INR based}   & NeRF \cite{martin2021nerf}                     & 526.7                         & 102.4                         & 769.6                         & 124.4                         \\ \midrule
                              & H.264 \cite{wiegand2003overview}                    & 1.730                         & 381.3                         & 2.130                         & 448.6                         \\
\multirow{-2}{*}{Commercial}  & HEVC \cite{sullivan2012overview}                     & 1.070                         & 697.3                         & 2.130                         & 658.9                         \\ \midrule
                              & DVC \cite{lu2019dvc}                      & 62.39                         & 387.1                         & 68.91                         & 399.4                         \\
\multirow{-2}{*}{Data driven} & SSF \cite{agustsson2020scale}                      & 6.987                         & 397.0                         & 6.127                         & 119.7                         \\ \bottomrule
\end{tabular}
\label{tab:time}
\end{table}

\section{Pretraining Approach}
In pursuit of extracting 3D visual representations from medical images, we introduce a vision-language pretraining approach that leverages large language models for generating textual descriptions of 3D images. This framework encompasses several key components and learning objectives, and can be summarized as Figure \ref{fig:sum} .

\begin{figure}[t]
    \centering
    \includegraphics[width = 0.6\linewidth]{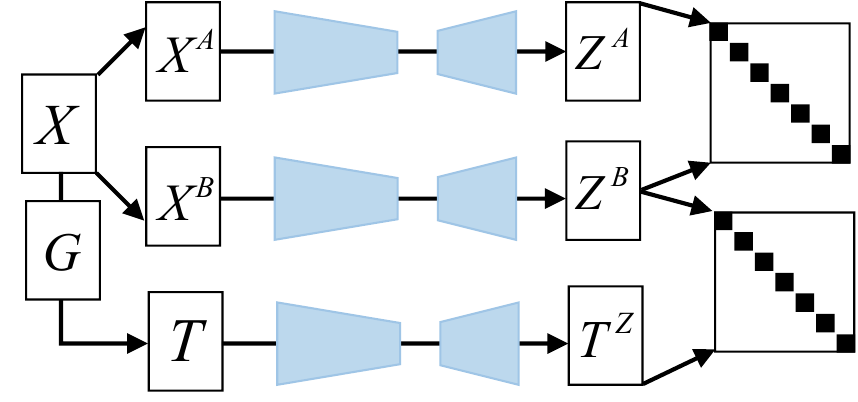}
    \caption{In this framework, \(X\) and \(T\) denote the medical images and their corresponding textual descriptions, respectively, while \(G\) represents the text generator. Our approach leverages large-scale models to generate textual descriptions, subsequently integrating image pretraining with vision-language pretraining methodologies.
}
    \label{fig:sum}
\end{figure}

\subsection{Text Generation}
The text generation process involves a generator \(G(\Theta)\), which is initially loaded with BLIP2's \cite{li2023blip} pretrained weights and further fine-tuned on the BIMCV dataset (a medical image-text pair dataset). This generator produces a sequence of tokens \(T = \{t_1, t_2, \ldots, t_n\}\) for each 3D medical image \(I\). The conditional probability of each token \(t_i\) is modeled as
\begin{equation}
P(t_i | I, t_{1:i-1}) = \text{softmax}(W_{o}h_i + b_o),
\end{equation}
where \(W_o\) and \(b_o\) are the weights and biases of the output layer, and \(h_i\) represents the hidden state at time step \(i\).

\subsection{Visual-Textual Representation Learning}
The framework employs an image encoder \(f_I(\cdot)\) and a text encoder \(f_T(\cdot)\) to learn visual-textual representations. The text encoder is initialized with the weights from BioBERT \cite{lee2020biobert} and is frozen during pretraining. For a batch of image-text pairs \((X_i, T_i)\), the feature representations are computed as \(v_{e,i} = f_I(X_i)\) and \(t_{e,i} = f_T(T_i)\). A contrastive learning objective is used to predict the matched pair \((v_{e,i}, t_{e,i})\) while ensuring significant separation of negative pairs.

\subsection{3D Visual Representation Learning}
To address the challenges of 1-to-n positive-negative pairings in 3D visual learning, we employ a negative-free learning objective. This involves generating two distinct views \(X_1\) and \(X_2\) of the medical volume \(X\) through data augmentation. The goal is to minimize the off-diagonal elements and maximize the diagonal elements of the cross-correlation matrix \(\hat{V}_{\text{corr}}\). The loss function is formulated as
\begin{equation}
L_{VLP} = \lambda_1 \left( \frac{1}{N} \sum_{i=1}^{N} \left( \frac{1}{2} \sum_{j \neq i} \hat{v}_{1,j}^T \hat{v}_{2,i} \right)^2 \right) + \lambda_2 \left( \frac{1}{N} \sum_{i=1}^{N} \left( \frac{1}{2} \sum_{j \neq i} \hat{t}_{1,j}^T \hat{t}_{2,i} \right)^2 \right),
\end{equation}
where \(\lambda_1\) and \(\lambda_2\) are non-negative hyperparameters, and \(\hat{v}_{1,j}\) and \(\hat{t}_{1,j}\) represent the normalized embeddings of the first and second views, respectively. Ultimately, the pretrained encoders are utilized to extract semantic information from high-frequency, low-frequency, and original images.

\subsection{Ablation Experiment of Pretraining}
In the main text, we mention the use of a frozen pretrained network for feature extraction. Freezing the network weights aids in stabilizing the training process of the model. Furthermore, vision-language pretraining methods have been validated as effective means for image feature extraction. In this section, we conduct ablation experiments, the results of which are shown in Table \ref{abl:pretrain}. We observe that activating the feature extraction weights during the training phase of the teacher model leads to failure in image reconstruction. This could be attributed to an insufficient number of epochs set, preventing adequate fitting to the codebook features. Meanwhile, the pretrained network substantially aids in feature extraction, significantly enhancing the reconstruction results.
\begin{table}[h]
\centering
\fontsize{8.5}{10.8}\selectfont
\renewcommand\tabcolsep{1.5pt}
\caption{Ablation experiment of pretraining.}
\label{abl:pretrain}
\begin{tabular}{cccccc}
\toprule[1.2pt]
\multirow{2}{*}{Freeze} & \multirow{2}{*}{Pretrain} & \multicolumn{2}{c}{Spleen} & \multicolumn{2}{c}{CREMI} \\ \cline{3-6} 
                        &                           & PSNR        & SSIM         & PSNR        & SSIM        \\ \hline
                        & $\checkmark$                      & 13.49       & 0.3142       & 12.71       & 0.2984      \\
$\checkmark$                    &                           & 41.51       & 0.9312       & 24.91       & 0.7841      \\
\rowcolor[HTML]{EFEFEF}
$\checkmark$                    & $\checkmark$                      & 45.46       & 0.9730       & 30.25       & 0.8542      \\ \bottomrule[1.2pt]
\end{tabular}
\end{table}
\section{Algorithm Pseudocode}
The source code for our paper is provided in the supplementary materials. For ease of understanding, this section presents the pseudocode format of the algorithm proposed in our paper as Algorithm \ref{al1} and \ref{al2}.

\begin{algorithm}[h]
\caption{Teacher Model Training with Mathematical Formulas}
\label{al1}
\SetKwInOut{Input}{Input}
\SetKwInOut{Output}{Output}

\Input{Medical images dataset}
\Output{Trained Teacher Model}

Initialize Teacher Model with pre-trained weights\;

\ForEach{image in dataset}{
    Apply wavelet transformation to image: $W(\text{image})$ \tcp*[h]{see Equ.\ref{waveequ}} \\
    Extract features using pre-trained network: $F = \text{ExtractFeatures}(W(\text{image}))$ \tcp*[h]{see Equ.\ref{equ:extract}} \\
    Quantize features using learnable codebook: $Q = \text{Quantize}(F)$ \\
    Compute reconstruction loss: $L_{\text{recon}} = \mathcal{L}_{\text{recon}}(F, Q)$ \\
    Compute quantization loss: $L_{\text{quant}} = \mathcal{L}_{\text{quant}}(F, Q)$ \\
    Total loss: $L = L_{\text{recon}} + \lambda L_{\text{quant}}$ \tcp*[h]{where $\lambda$ is a balance parameter} \\
    Update Teacher Model to minimize $L$
}
\end{algorithm}

\begin{algorithm}[h]
\caption{Knowledge Distillation with Mathematical Formulas}
\label{al2}
\SetKwInOut{Input}{Input}
\SetKwInOut{Output}{Output}

\Input{Teacher Model, Untrained Student Model}
\Output{Trained Student Model}

Initialize Student Model\;

\ForEach{training step}{
    Align Student's intermediate features with Teacher's: $F_{\text{student}} = \text{StudentFeatures}()$ \\
    Compute feature alignment loss: $L_{\text{feature}} = \mathcal{L}_{\text{feature}}(F_{\text{teacher}}, F_{\text{student}})$ \\
    Align Student's final output with Teacher's: $O_{\text{student}} = \text{StudentOutput}()$ \\
    Compute output alignment loss: $L_{\text{output}} = \mathcal{L}_{\text{output}}(O_{\text{teacher}}, O_{\text{student}})$ \\
    Total distillation loss: $L_{\text{distill}} = \alpha L_{\text{feature}} + \beta L_{\text{output}}$ \tcp*[h]{where $\alpha, \beta$ are balance parameters} \\
    Update Student Model to minimize $L_{\text{distill}}$
}
\end{algorithm}





In this pseudocode, $\mathcal{L}_{recon}$, $\mathcal{L}_{quant}$, $\mathcal{L}_{feature}$, and $\mathcal{L}_{output}$ represent the specific loss functions used in the paper. Similarly, $\lambda$, $\alpha$, and $\beta$ are hyperparameters that need to adjust based on the specifics.
\section{Parameter Meaning and Hyperparameter Setting}
The meanings of all parameters mentioned in our paper are detailed in Table \ref{tab:params}. For the hyperparameters referenced in the main text, please refer to the code provided by us.

\section{Social Impact}
\label{sec:social}
\begin{flushleft}
In this study, we introduce UniCompress, a pioneering approach to multi-data medical image compression, combining Implicit Neural Representation (INR) networks with knowledge distillation techniques for enhanced efficiency and speed. We recognize that efficient image compression is vital for the sustainable development of the healthcare industry amid surging medical data volumes \cite{shen10harnessing}. UniCompress not only reduces storage and transmission costs but could also improve diagnostic efficiency by decreasing data transfer times, which is crucial for telemedicine and emergency medical scenarios.

Ethically, we prioritize patient privacy and data security in developing and applying UniCompress, ensuring the compression process does not compromise image quality, thus maintaining the accuracy and reliability of medical diagnoses.

\section{Limitations}
\label{limitations}
Despite demonstrating effectiveness across various medical image datasets, we acknowledge certain technical limitations of UniCompress. First, the current INR networks may not fully preserve details in images with high dynamic ranges and complex textures, especially in high-frequency areas. Second, quantization loss in the knowledge distillation process could lead to incomplete learning of the teacher model's knowledge in some cases. Moreover, our model's generalizability in handling large-scale datasets, especially diverse data from different medical institutions and devices, needs improvement.

\section{Future Works}
To address these limitations, we plan to explore: (1) developing new INR architectures for better capturing and retaining high-frequency image information while maintaining compression efficiency; (2) optimizing knowledge distillation algorithms for more precise knowledge transfer and reduced information loss; (3) researching adaptive quantization strategies to balance compression ratios and image quality; (4) enhancing model generalizability to adapt to varied medical image data sources and qualities; (5) exploring interpretability of deep learning models for medical professionals to understand the decision-making in the compression process.

We anticipate these future efforts will further enhance UniCompress's performance, making it a robust tool in medical image compression and contributing significantly to global health initiatives.
\end{flushleft}

\begin{table}[tb]
    \centering
    \fontsize{8.5}{10.8}\selectfont
\renewcommand\tabcolsep{1.5pt}
    \caption{List of symbols and their meanings.}
    \label{tab:params}
    \begin{tabular}{ll}
        \toprule[1.2pt]
        Symbol & Meaning \\
        \midrule
        $N$ & Number of volumetric medical images to compress \\
        $V_i$ & The $i$-th volumetric medical image \\
        $x \in R^3$ & Spatial coordinates \\
        $Z_i$ & Prior features of the $i$-th image block \\
        $\Theta$ & Parameters of the INR network \\
        $I(x)$ & Predicted intensity value at coordinate $x$ \\
        $W_{high}(x, y, z)$ & High-frequency information \\
        $W_{low}(x, y, z)$ & Low-frequency information \\
        $W_x$ & Raw image \\
        $F(\cdot)$ & Pre-trained network for feature extraction \\
        $wh$, $wl$, $wx$ & High-frequency, low-frequency, and raw image features, respectively \\
        $E_h$, $E_l$, $E_x$ & Attention weights for high-frequency, low-frequency, and raw image features, respectively \\
        $z_q$ & Quantized representation \\
        $\gamma$ & Weighting factor for quantization loss \\
        $L_{recon}$ & Reconstruction loss \\
        $L_{quant}$ & Quantization loss \\
        $L_{t}$ & Total loss function \\
        $L_{code}$ & Codebook alignment loss \\
        $L_{KD}$ & Knowledge distillation loss \\
        $L_{out}$ & Output alignment loss \\
        $L_{total}$ & Final total loss function \\
        $\tau$ & Temperature parameter for KL divergence \\
        $\beta$ & Weighting coefficients for balancing different loss functions \\
        $d$ & Dimension of the codebook \\
        $E$ & Set of all codewords in the codebook \\
        $z_s^q$, $z_t^q$ & Discrete codebook of the student and teacher models, respectively \\
        $z_s$, $z_t$ & Logits outputs of the student and teacher models, respectively \\
        $F_s$, $F_t$ & Student and teacher models, respectively \\
        $X_i$ & Input data \\
        \bottomrule[1.2pt]
    \end{tabular}
    \label{tab:symbols}

\end{table}



\end{document}